\newtheorem{theorem}{Theorem}
\newtheorem{lemma}{Lemma}
\newtheorem{corollary}{Corollary}
\newtheorem{definition}{Definition}
\begin{document}
\title{Bit Allocation Law for Multi-Antenna Channel Feedback Quantization: Single-User Case}
\author{Behrouz Khoshnevis and Wei Yu \thanks{The material in this paper has been presented
in part at the 25th Biennial Symposium on Communications, Kingston, Canada, May 2010.} \thanks{The authors are with the Edward S. Rogers Sr. Department of Electrical and Computer Engineering, University of Toronto, 10 King's College Road, Toronto, Ontario, Canada M5S 3G4 (email: bkhoshnevis@comm.utoronto.ca; weiyu@comm.utoronto.ca).}}
\maketitle

\newcommand{\qbar}{\bar{q}}
\newcommand{\gbar}{\bar{\gamma}}
\newcommand{\ds}{\mathds}
\newcommand{\md}{\textmd}
\newcommand{\be}{\begin{equation}}
\newcommand{\ee}{\end{equation}}
\newcommand{\ci}{\circ}
\newcommand{\eq}{\eqref}
\newcommand{\defined}{\stackrel{\md{def}}{=}}
\newcommand{\csi}{_{_{\md{CSI}}}}
\newcommand{\mbf}{\mathbf}
\newcommand{\mbb}{\mathbb}
\newcommand{\mc}{\mathcal}
\newcommand{\bh}{\mbf{h}}

\newcommand{\stupe}{\mbf{S}(\mbf{H})}
\newcommand{\bv}{\mbf{v}}

\newcommand{\D}{\mc{D}}
\newcommand{\M}{\mc{M}}
\newcommand{\NM}{\dot{N}}
\newcommand{\ND}{\ddot{N}}
\newcommand{\BM}{\dot{B}}
\newcommand{\BD}{\ddot{B}}
\newcommand{\NkM}{\dot{N}_k}
\newcommand{\NkD}{\ddot{N}_k}
\newcommand{\mcy}{\mathcal{\mathbf{y}}}
\newcommand{\bu}{\mbf{u}}
\newcommand{\bw}{\mbf{w}}

\newcommand{\BkD}{\ddot{B}_k}
\newcommand{\BkM}{\dot{B}_k}

\newcommand{\BbarM}{{\dot{{B}}_{\md{ave}}}}
\newcommand{\BbarD}{{\ddot{{B}}_{\md{ave}}}}

\newcommand{\sh}{\mc{S}(\mbf{h})}

\newcommand{\du}{\mathrm{d}}

\begin{abstract}
This paper studies the design and optimization of a limited feedback single-user system with multiple-antenna transmitter and single-antenna receiver. The design problem is cast in form of the minimizing the average transmission power at the base station subject to the user's outage probability constraint. The optimization is over the user's channel quantization codebook and the transmission power control function at the base station. Our approach is based on fixing the outage scenarios in advance and transforming the design problem into a robust system design problem. We start by showing that uniformly quantizing the channel magnitude in dB scale is asymptotically optimal, regardless of the magnitude distribution function. We derive the optimal uniform (in dB) channel magnitude codebook and combine it with a spatially uniform channel direction codebook to arrive at a product channel quantization codebook. We then optimize such a product structure in the asymptotic regime of $B\rightarrow \infty$, where $B$ is the total number of quantization feedback bits. The paper shows that for channels in the real space, the asymptotically optimal number of direction quantization bits should be ${(M{-}1)}/{2}$ times the number of magnitude quantization bits, where $M$ is the number of base station antennas. We also show that the performance of the designed system approaches the performance of the perfect channel state information system as $2^{-\frac{2B}{M+1}}$. For complex channels, the number of magnitude and direction quantization bits are related by a factor of $(M{-}1)$ and the system performance scales as $2^{-\frac{B}{M}}$ as $B\rightarrow\infty$.

\end{abstract}
\begin{keywords} Beamforming, bit allocation, channel quantization, limited feedback, multiple antennas, outage probability, power control.
\end{keywords}

\section{Introduction}\label{S_Yek}
It is well established that the use of multiple antennas at base station can considerably improve the performance of wireless communication links. The realization of these improvements often requires channel state information (CSI) at the base station. In time division duplex (TDD) systems with uplink and downlink channel reciprocity, the base station can acquire this information via training on the uplink channels. In frequency division duplex (FDD) systems, however, the remote terminal needs to quantize and feedback the channel information to the base station. Such systems are generally referred to as \emph{limited feedback} systems in the literature.

The availability of CSI at the base station can significantly improve the performance of the limited feedback single-user and multiuser systems. In a single-user system, limited feedback of CSI provides the base station with power gain. For the multiuser multiplexing systems, the availability of CSI is even more important. In such systems, the base station needs the users' channel state information to distinguish the users spatially and to perform rate and/or power control accordingly.

\subsection{Related Work}
\subsubsection{Single-User Systems}
Assuming perfect CSI, the capacity of single-user channels with single antenna can be achieved by variable-rate coding at the base station, where the transmission rate is continuously adapted to the channel magnitude \cite{6thesis,7thesis}. The practical implementation of variable-rate coding is achieved by adaptive modulation and coding (AMC), where the received signal-to-noise ratio (SNR) is compared with a certain set of thresholds and, based on this comparison, an appropriate modulation and coding index is chosen for the current channel realization \cite{wifi,wimax,LTE}. From the perspective of a limited feedback system design, these SNR thresholds are in essence the SNR quantization thresholds and effectively form a quantization codebook for the received SNR or equivalently for the channel magnitude. For an optimal system design, one therefore needs to optimize this codebook so that the average transmission rate is maximized for a given target decoding error probability at the receiver side \cite{6thesis,7thesis}.

Optimal design of limited feedback systems becomes more complicated when the stations are equipped with multiple antennas. The capacity of a limited feedback single-user channel with multiple antennas and temporally i.i.d. fading is shown to be achieved by sharing a codebook of transmission covariance matrices between the transmitter and receiver, where the receiver chooses the best transmit covariance matrix for the current channel realization and sends the corresponding index back to the transmitter \cite{4thesis,lau,14thesis}. Due to the complexities of the covariance codebook design, most of the work in this field focuses on either channel magnitude quantization \cite{kim,Bhash,KhoshSab} or channel direction quantization \cite{narula,heath,Mukk,xia,roh}, but not both. One can however easily show that for a limited-feedback system to approach the performance of the perfect CSI system, both channel direction and magnitude information are required \cite{globecom,queens}. It is therefore essential to study the design of magnitude and direction codebooks jointly and derive the optimal split of feedback bits between the two codebooks. The problem of joint codebook design and optimization has not been well addressed and formulated in the literature and it is the main subject of this paper.

\subsubsection{Multiuser Systems}
Although this paper is mainly concerned with single-user systems, it is informative to provide a short review of the literature on multiuser limited feedback systems.

Single-antenna multiuser systems are widely addressed in the literature, e.g. \cite{paper10,paper11}. The problem in general is to find a channel magnitude feedback mechanism that realizes the multiuser diversity gain, i.e. the scaling of the sum rate with the number of users in a large network of users. The design processes in these papers are however mainly numerical and they do not provide insight into the magnitude quantization codebook structure. The multiple-antenna multiuser systems are also widely studied in the literature \cite{26thesis,27thesis,24thesis,28thesis,29thesis,31thesis}. The availability of multiple antennas at the base station permits for spatial multiplexing of the users' data and for serving multiple users simultaneously. The work in \cite{24thesis} specifically shows that, for realizing the multiplexing gain, the number of feedback bits per user should scale logarithmically with SNR. In order to address the scheduling problem in large networks of users, a well-justified approach is to choose users with high channel magnitudes, low quantization errors, and almost orthogonal channel directions \cite{31thesis,27thesis}. In order to realize the multiuser diversity gain in such networks, \cite{31thesis} shows that one needs channel gain information (CGI) in addition to channel direction information (CDI); the CGI however is assumed to be unquantized in this work. The split of feedback bits between CGI and CDI quantization codebooks is studied in \cite{paper6}. Although optimal bit allocation laws and codebook structures are not provided by \cite{paper6} in a closed form, the authors are able to numerically show that as the number of users increase, more bits should be used for CGI quantization in order to benefit from multiuser diversity gain.

\subsection{Problem Formulation}
In this paper, we address the problem of channel magnitude and direction quantization codebook design and optimization for a multiple-input single-output (MISO) single-user channel.

We are mainly interested in the optimal split of feedback bits between the two codebooks and also deriving the scaling of the system performance with the number of feedback bits. The system design problem is formulated as minimizing the average transmission power at the base station subject to the user's target outage probability. The optimization is over the user's channel quantization codebooks and the power control function at the base station. The problem formulation used here is typically more appropriate for fixed-rate delay-sensitive applications, e.g. voice over internet protocol (VoIP) and fixed-rate video streaming applications. The power control strategy that results from this formulation tries to fix the transmission rate by compensating for the channel fading and in this sense resembles the power control mechanisms used in Wideband Code Division Multiple Access (WCDMA) system standards \cite{book_wcdma}. Alternatively, it is also possible to fix downlink transmission power and instead utilize adaptive modulation and coding to adapt the transmission rate to the user's channel quality as in Worldwide Interoperability for Microwave Access (WiMAX) and 3GPP Long Term Evolution (LTE) system standards \cite{book_lte}. Our problem formulation and results are not directly applicable to these latter models. The problem of optimal feedback bit split between channel magnitude and direction codebooks for such systems is not well studied in the literature and further analysis is required in this regard.

The problem formulation used in our work is somewhat similar to the formulations used by \cite{Gia2,Gia3,Gia4} for improving the power efficiency of limited feedback systems. The work in \cite{Gia2} specifically formulates the problem of minimizing the transmission power in a wireless sensor network subject to an average rate and average bit error rate constraint at the fusion center. The multiple single-antenna sensors in this work are assumed fully synchronized and therefore can be effectively modeled as a single multiple-antenna transmitter. Based on this model, the authors utilize adaptive modulation and coding and present a numerical approach to optimize the channel gain and direction quantization codebooks. The work however does not address the optimal split of the feedback bits between the codebooks and only numerically studies the scaling of the system performance with the number of feedback bits.

Our work is also related to the literature on vector quantization codebook design, although the deign objective used here is quite different from the classical design objectives, e.g. minimizing the mean squared error (MSE) distortion measure. It should also be noted that the channel magnitude and direction codebooks used in our analysis effectively form a product quantization codebook for the user's channel vector. This product structure, also known as shape-gain quantization \cite{gersho}, provides several practical advantages including faster quantization and lower storage requirement for the quantization codebooks. In addition, most practical systems already have power control modules that essentially act as channel magnitude quantizers; therefore, a product codebook structure is more easily adopted in such systems.

In order to study the problem of channel quantization codebook design, we first focus on the scalar channel magnitude quantization and characterize the optimal magnitude quantization codebook. It is shown that, as the codebook size $\NM$ increases, the optimal quantization levels form an approximately geometric sequence and hence become uniformly spaced in dB scale. Interestingly enough, the asymptotic optimality of uniform magnitude quantization (in dB scale) is not affected by the magnitude distribution function as long as certain regularity conditions are satisfied. This uniform in dB channel magnitude quantization codebook structure in fact provides a theoretical justification for using uniform (in dB) SNR quantization thresholds for power control in practical systems such as WCDMA and IEEE 802.11n system standards \cite{book_wcdma,wifi}.

For the optimal uniform channel magnitude quantization codebooks, we are able to show that the gradient of the average transmission power (the objective function) with respect to uniform (in dB) quantization levels should diminish as $\Theta({\NM}^{-3/2})$ as $\NM\rightarrow\infty$. We also derive a codebook that achieves such optimal scaling.

We next form a product channel vector quantization codebook comprising the optimal uniform (in dB) magnitude quantization codebook and a spatially uniform direction quantization codebook and study its performance in the asymptotic regime of $B\rightarrow\infty$, where $B$ is total number of feedback quantization bits. After deriving the optimal power control and beamforming function at the base station, we optimize the product quantization codebook structure such that the average transmission power is minimized. We show that, for channels in real space,
\begin{enumerate}
\item The optimal number of direction quantization bits is $\frac{M-1}{2}$ times the number of magnitude quantization bits, where $M$ is the number of base station antennas.
\item As $B$ increases, the system performance approaches the perfect CSI system as $2^{-\frac{2B}{M+1}}$.
\end{enumerate}

The reason that we focus on real space channels in this paper, is to compare the real-space magnitude-direction bit allocation laws with similar laws in the quantization theory literature, which are based on conventional mean squared error (MSE) distortion measure. For channels in the complex space, one can apply the exact same approach to show that the number of magnitude and direction quantization bits are related by a factor of $(M{-}1)$ and that the system performance scales as $2^{-\frac{B}{M}}$ in the asymptotic regime of $B{\rightarrow}\infty$ (see Section \ref{complex_channels}).

The remainder of this paper is organized as follows. Section \ref{S_Do} shows the asymptotic optimality of uniform (in dB) channel magnitude quantization. Section \ref{S_Se} presents the single-user system design problem in its general form and describes our approach in transforming it to a robust optimization problem. Section \ref{S_Char} describes the product codebook structure and derives the asymptotically optimal magnitude-direction bit allocation law. Section \ref{numerical_results} presents the numerical results and Section \ref{S_Panj} concludes the paper.

\emph{Notations:} All the computations in this paper are for the the real space. All logarithm functions throughput the paper are base 2. Also, the angle between any two unit-norm vectors $\bu$ and $\bv$ is defined as $\angle(\bu,\bv)=\arccos |\bu^T\bv|$ so that $0\leq \angle(\bu,\bv)\leq\pi/2$.

\section{Channel Magnitude Quantization}\label{S_Do}

This section studies the structure of the channel magnitude quantization codebook. The resulting codebook structure is used later in the paper to form a product vector channel quantization codebook. As mentioned earlier, we assume real space channels throughout the paper.

\subsection{SISO Limited Feedback System Optimization}

Consider a single-input single-output (SISO) channel $h\in\mathds{R}$. The input signal $s$ and the output signal $r$ are related as
\be\label{siso}
r=\sqrt{P(h)}hs+z,
\ee
where $\mathbb{E}[|s|^2]=1$, $P(h)$ is the transmission power, and $z\sim\mathcal{N}(0,1)$ is the Gaussian receiver noise.

Assume a perfect CSI system, where the base station is required to guarantee a fixed target SNR $\gamma_{_0}$ at the receiver. The optimal power control strategy is therefore channel inversion, where the base station sets the transmission power $P(h)$ according to \be\label{CSI-power} P(h)=\frac{\gamma_{_0}}{|h|^2}.\ee

Now consider a limited feedback system with perfect CSI at the receiver and a feedback link from the receiver back to the base station with a capacity of $B$ bits per fading block. We are interested in a system design that minimizes the average transmission power subject to a target outage probability $q$ at the user side. The outage probability is defined as the probability that the received SNR drops below a target SNR $\gamma_{_0}$.

To perform power control in a limited feedback system, the receiver quantizes the channel magnitude $|h|$ and sends the $B$-bit index of the quantized magnitude level back to the transmitter; the transmitter then sets the transmission power according to the quantized magnitude. The limited feedback system design is therefore is a two-fold problem: 1) optimizing the power control function given the magnitude quantization codebook; 2) optimizing the magnitude quantization codebook itself.

For the SISO system described above, the solution to the first problem is straightforward and can be expressed in a closed form as follows. Consider a channel magnitude quantization codebook \[\mbb{Y}=\{y^{(1)},y^{(2)},\cdots,y^{(\NM)}\},\] where $\NM$ is the codebook size\footnote{The dot notation $~\dot{(~~)}~$, throughput the paper, is used for all the variables associated with channel magnitude quantization. For the variables associated with channel direction quantization the double dot notation $~\ddot{(~~)}~$ is used.}. Here $y^{(n)}$'s are the magnitude quantization levels that are used for quantizing the magnitude variable $Y\defined |h|^2$. For a fixed codebook $\mbb{Y}$ of size $\NM$, we are interested in the power control function $P(h)$ that minimizes the average transmission power for a fixed target outage probability $q$. With this objective, the optimal strategy is to accumulate all the outage scenarios in the lowest magnitude interval, i.e. the leftmost interval in Fig. \ref{Fig1}, and set $P(h)$ in other intervals to the minimum possible value that prevents outage. This minimizes the average transmission power since preventing outage with a higher channel magnitude requires less transmission power.

\begin{figure}
\centering
\includegraphics[width=2in]{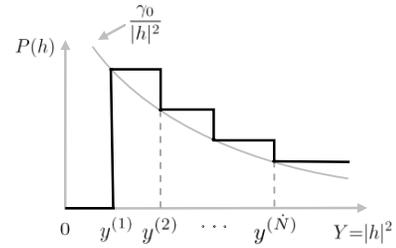}
\caption{The optimal transmission power $P(h)$ is the step-like function shown in bold line. The channel magnitude quantization codebook $\mbb{Y}=\{y^{(1)},y^{(2)},\cdots,y^{(\NM)}\}$ is fixed.}
\label{Fig1}
\end{figure}

Having the optimal transmission power in terms of the quantization codebook $\mbb{Y}$, we can now address the optimization of the quantization codebook itself. The objective is to find the codebook $\mbb{Y}$ of a given size $\NM$ that minimizes the average transmission power subject to a target outage probability $q$.

Define the \emph{quantized magnitude} $\tilde{Y}$ as the lower end of the quantization interval that includes $Y$:
\be\label{quant-mag} \tilde{Y}=y^{(n)} ~~~~ \md{if} ~~~~ y^{(n)}\leq Y <y^{(n+1)},\ee where $1{\leq} n{\leq} \NM$ and $y^{(\NM+1)}{\defined}\infty$.
The normalized average transmission power, according to Fig. \ref{Fig1}, is given by
\be\label{Pave}
\mc{P}(\mbb{Y})\defined\frac{1}{\gamma_{_{0}}}\mathbb{E}[P(h)]=\mbb{E}\left[\frac{1}{\tilde{Y}}\right]
=\sum_{n=1}^{\NM}{\frac{Q^{(n)}}{y^{(n)}}},
\ee
where $Q^{(n)}=F(y^{(n+1)})-F(y^{(n)})$ and $F(\cdot)$ is the cumulative distribution function (cdf) of $Y$.

According to Fig. \ref{Fig1}, the first quantization level $y^{(1)}$ is fixed by the outage probability $q$: \be \label{y1} y^{(1)}=F^{-1}(q),\ee where $F^{-1}(\cdot)$ is the inverse cdf of $Y$. In order to optimize the codebook $\mbb{Y}$, we take the derivative of $\mathcal{P}(\mbb{Y})$ with respect to $y^{(n)}$ for $2{\leq} n{\leq} \NM$ and set it equal to zero. This results in the following equation:
\be\label{der}
f(y^{(n)})\left[\frac{1}{y^{(n-1)}}-\frac{1}{y^{(n)}}\right]=\frac{Q^{(n)}}{(y^{(n)})^2},
\ee
where $f(\cdot)$ is the probability density function (pdf) of $Y$. By approximating
\be\label{approx}
Q^{(n)}\approx f(y^{(n)})(y^{(n+1)}-y^{(n)}),
\ee
for $2{\leq} n{\leq} \NM{-}1$, we achieve the following relation between the magnitude quantization levels:
\be\label{GeoResult}
\frac{y^{(n+1)}}{y^{(n)}}=\frac{y^{(n)}}{y^{(n-1)}},
\ee
i.e. the optimal quantization levels $y^{(n)}$, $1\leq n\leq \NM$, form an approximate geometric sequence and therefore are uniformly spaced in dB scale. This result, surprisingly enough, does not depend on the distribution of $Y=|h|^2$. The geometric ratio of such a sequence however should depend both on the codebook size and the distribution of $Y$ as it is shown next.

\subsection{Asymptotic Optimality of Uniform (in dB) Magnitude Quantization}

In order to rigorously show the efficiency of the geometric sequences for magnitude quantization, we proceed to bound the gradient of the objective function $\mathcal{P}$ in \eqref{Pave} with respect to such quantization levels. By showing that the norm of the gradient goes to zero as $\NM$ increases, we can then claim that such quantization codebooks are at least asymptotically optimal as $\NM\rightarrow\infty$.

Starting with some definitions, we define a geometric sequence $\mathbb{Y}^{(g)}(r)$ with the ratio parameter $r>1$ as follows:
\be\label{geoseq}
\mathbb{Y}^{(g)}(r)=\left\{\left.y^{(n)}=a{r^{n-1}}\right|1\leq n \leq \NM\right\},
\ee
where $a\defined y^{(1)}$ for notation convenience. We refer to these sequences as the uniform (in dB) magnitude quantization codebooks. In proving the asymptotic optimality of uniform (in dB) magnitude quantization codebooks, we will need the following regularity conditions on the channel magnitude distribution:

\begin{definition} \label{reg-cond} Magnitude distribution regularity conditions:
\begin{enumerate}
\item $f(y)$ is a positive differentiable function with bounded derivative over $y>0$.
\item $\eta=\lim_{y\rightarrow\infty}{{-f(y)}/{f'(y)}}$ exists and $\eta\neq 0$.
\item $\mbb{E}[Y]$ is bounded and therefore $\lim_{y\rightarrow\infty}yf(y)=0$.
\end{enumerate}
\end{definition}
The logic behind these assumptions becomes clear in the discussions to come.

In order to bound the gradient of the objective function $\mathcal{P}$ in \eqref{Pave}, one can use the Taylor expansion to bound the approximation error in \eqref{approx} and obtain the following bound on the gradient:
\begin{lemma} \label{L_Yek} For the uniform magnitude quantization codebook $\mathbb{Y}^{(g)}(r)$, we have
\be\label{grad1}
\left\|\nabla_{\mathbb{Y}^{(g)}(r)} \mathcal{P}\right\| < \frac{1}{2}\mu\sqrt{\NM} (r-1)^2+D,
\end{equation}
where $\mu$ is an upper bound of $f'(y)$ for all $y>0$, and
\begin{equation}
\label{Dterm}
D=\left|\frac{1}{y^2}\left((r-1)yf(y)-\int_{y}^{\infty}{f(t) \mathrm{d} t}\right)\right|_{y=y_{_{\NM}}=a{r^{\NM-1}}}
\end{equation}
\end{lemma}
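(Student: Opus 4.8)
The plan is to evaluate the exact partial derivatives of $\mc{P}$ at a geometric codebook and to show that each component is governed purely by the remainder of the approximation \eq{approx}. I would start from the differentiation that underlies \eq{der}: since $y^{(1)}=F^{-1}(q)$ is fixed by \eq{y1}, the free variables are $y^{(2)},\dots,y^{(\NM)}$, and for each such level only the two terms $Q^{(n-1)}/y^{(n-1)}$ and $Q^{(n)}/y^{(n)}$ of the sum \eq{Pave} depend on $y^{(n)}$. Carrying out the differentiation gives
\be
\frac{\partial \mc{P}}{\partial y^{(n)}}=f(y^{(n)})\left[\frac{1}{y^{(n-1)}}-\frac{1}{y^{(n)}}\right]-\frac{Q^{(n)}}{(y^{(n)})^2},
\ee
whose vanishing is exactly the stationarity condition \eq{der}. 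Specializing to $\mbb{Y}^{(g)}(r)$, where $y^{(n-1)}=y^{(n)}/r$, the bracket collapses to $(r-1)/y^{(n)}$, so the first term becomes $f(y^{(n)})(r-1)/y^{(n)}$.

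Next I would treat the interior indices $2\leq n\leq\NM-1$, for which $Q^{(n)}=\int_{y^{(n)}}^{y^{(n+1)}}f(t)\,\du t$ is taken over a finite interval. Writing $f(t)=f(y^{(n)})+f'(\xi)(t-y^{(n)})$ with the mean-value form of the remainder and integrating produces $Q^{(n)}=f(y^{(n)})(y^{(n+1)}-y^{(n)})+E_n$, i.e. precisely \eq{approx} plus a remainder $E_n$. Because $y^{(n+1)}-y^{(n)}=y^{(n)}(r-1)$ on the geometric grid, the leading part of $Q^{(n)}/(y^{(n)})^2$ cancels $f(y^{(n)})(r-1)/y^{(n)}$ exactly, leaving $\partial\mc{P}/\partial y^{(n)}=-E_n/(y^{(n)})^2$. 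Using the bound $|f'(y)|\leq\mu$ (the derivative is bounded by the first regularity condition in Definition \ref{reg-cond}, and $\mu$ denotes that bound) gives $|E_n|\leq\frac{\mu}{2}(y^{(n+1)}-y^{(n)})^2=\frac{\mu}{2}(y^{(n)})^2(r-1)^2$, so each interior component obeys the uniform estimate $|\partial\mc{P}/\partial y^{(n)}|\leq\frac{\mu}{2}(r-1)^2$.

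The remaining index $n=\NM$ must be handled separately, and this is the main subtlety of the argument: since $y^{(\NM+1)}=\infty$, the quantity $Q^{(\NM)}=\int_{y^{(\NM)}}^{\infty}f(t)\,\du t$ is a tail integral to which the finite-interval approximation \eq{approx} simply does not apply. Substituting this tail directly into the derivative formula and factoring out $1/(y^{(\NM)})^2$ reproduces exactly the boundary quantity $D$ of \eq{Dterm}, which I would keep explicit rather than bound here, since its decay is precisely where the tail regularity conditions (the existence of $\eta$ and $\lim_{y\to\infty}yf(y)=0$) enter in the subsequent analysis. Finally I would assemble the gradient norm $\|\nabla_{\mbb{Y}^{(g)}(r)}\mc{P}\|=\big(\sum_{n=2}^{\NM}(\partial\mc{P}/\partial y^{(n)})^2\big)^{1/2}$: the $\NM-2$ interior terms each contribute at most $(\frac{\mu}{2}(r-1)^2)^2$ and the last contributes $D^2$, so $\|\nabla\mc{P}\|^2<\NM(\frac{\mu}{2}(r-1)^2)^2+D^2$, and the subadditivity $\sqrt{A+B}\leq\sqrt{A}+\sqrt{B}$ delivers the claimed bound \eq{grad1}, with the strict inequality inherited from $\NM-2<\NM$. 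The only delicate bookkeeping is confirming the exact cancellation of the leading approximation term against $f(y^{(n)})(r-1)/y^{(n)}$ and combining the interior estimates with the separate tail term $D$ without double counting.
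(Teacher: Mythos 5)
Your proposal is correct and follows essentially the same route as the paper's own proof: the same exact cancellation of the leading term of $Q^{(n)}$ against $f(y^{(n)})\bigl[1/y^{(n-1)}-1/y^{(n)}\bigr]$ on the geometric grid, the same Taylor/mean-value bound $|E_n|\leq\frac{1}{2}\mu(y^{(n+1)}-y^{(n)})^2$ for the interior indices, the same isolation of the tail term at $n=\NM$ as $D$, and the same subadditivity step to assemble the gradient norm. No gaps.
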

\begin{proof}
See Appendix \ref{A_Yek}.
\end{proof}

As discussed earlier we want the gradient bound \eqref{grad1} to diminish to zero as the codebook size $\NM$ increases. To achieve this, the geometric ratio $r$ should be finely tuned with $\NM$. The idea is to set $r$ appropriately, so that both terms on the right-hand side of \eqref{grad1} go to zero as $\NM\rightarrow\infty$. Since $r>1$, we can express $r$ as a function of $\NM$ as \[r=1{+}\mc{L}(\NM),\] where $\mc{L}(\NM)$ is an arbitrary
positive function. Define \[\zeta(\NM) = -{\log \mathcal{L}(\NM)}/{\log \NM},\] so that \[r=1+\NM^{-\zeta(\NM)}.\]  Also, let \[\zeta(\infty){=}\lim_{\NM\rightarrow\infty}{\zeta(\NM)}.\]

To achieve $\left\|\nabla \mathcal{P}\right\|\rightarrow 0$ in \eqref{grad1}, both $\sqrt{\NM} (r-1)^2$ and $D$ should go to zero. From (\ref{Dterm}), for $D\rightarrow 0$, we need $y_{_{\NM}}=a{r^{\NM-1}}\rightarrow\infty$. Substituting $r = 1+\NM^{-\zeta(\NM)}$ and applying L'H\"{o}pital's rule, we get a necessary condition
$\zeta(\infty) \leq 1$.

Now, consider the first term in \eqref{grad1}, since $\zeta(\infty) \leq 1$ and $r=1+\NM^{-\zeta(\NM)}$, we see that the first term cannot go to zero faster than $\Theta(\NM^{-{3}/{2}})$. Since $\left\|\nabla \mathcal{P}\right\|$ is bounded by the magnitude of the first term in \eqref{grad1}, we conclude that in the class of
geometric sequences, the best possible scaling for $\left\|\nabla \mathcal{P}\right\|$ is $\Theta(\NM^{-{3}/{2}})$, which is achieved only with $r \ge 1 + \Theta(\NM^{-1})$.

\subsection{Proposed Magnitude Quantization Codebook}
In the following, we propose a specific solution for $r$ that achieves the optimal scaling. The uniform magnitude codebook corresponding to this value of $r$ is referred to as the optimal uniform (in dB) codebook in the remainder of this paper.

\begin{definition}\label{D_Yek} For any constant $0{<}c{<}\infty$, define the positive function $\mathcal{L}_c(n)$ for $n{>}1$ as the solution to the following equation:
\be\label{L_c}
\mathcal{L}_c(n)\left(1+\mathcal{L}_c(n)\right)^{n-1}=c.
\ee
Note that $\mathcal{L}_c(n)$ is a well defined function, since the left-hand side of \eqref{L_c} is monotonic in $\mathcal{L}_c(n)$ and has the range of $(0,\infty)$. Also define the function $\zeta_c(n)=-{\log \mathcal{L}_c(n)}/{\log n}$, i.e.
\be\label{zeta}
\mathcal{L}_c(n)=n^{-\zeta_c(n)}.
\ee
It can be shown that for any $0{<}c{<}\infty$,
\begin{IEEEeqnarray}{ccc}\label{limit1}
\lim_{n\rightarrow\infty} \zeta_c(n) &=& 1.
\end{IEEEeqnarray}
\end{definition}

\begin{definition}\label{uniform-mag-code} Define the \emph{optimal uniform magnitude quantization codebook} $\mbb{Y}^\star$ as
\be\label{ystar}
\mbb{Y}^\star\defined\mbb{Y}^{(g)}(r^\star),
\ee
with \[r^\star=1+\mathcal{L}_{\eta/a}(\NM),\] where $a=F^{-1}(q)$ and $\eta=\lim_{y\rightarrow\infty}{{-f(y)}/{f'(y)}}$. Also define the corresponding \emph{quantized magnitude} $\tilde{Y}^\star$ similarly as in \eqref{quant-mag}.
\end{definition}

Now the norm of the gradient with respect to $\mbb{Y}^\star$ can be bounded as follows.
\begin{theorem}\label{T_Yek} Assuming the regularity conditions, we have the following for the optimal uniform codebook $\mbb{Y}^\star$ and any $\epsilon>0$:
\be\label{theorem1}
\left\|\nabla_{\mbb{Y}^\star} \mathcal{P}\right\| < \frac{1}{2}\mu \NM^{-\left(2\zeta_{\eta/a}\left(\NM\right)-\frac{1}{2}\right)}+o\left(\NM^{-3+\epsilon}\right),
\ee
where $\mu$ is defined in Lemma \ref{L_Yek}.
\end{theorem}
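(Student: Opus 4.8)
The plan is to specialize the general gradient bound of Lemma~\ref{L_Yek} to the codebook $\mbb{Y}^\star=\mbb{Y}^{(g)}(r^\star)$ and to show that, with the particular tuning $r^\star=1+\mathcal{L}_{\eta/a}(\NM)$, the first term on the right-hand side of \eqref{grad1} reproduces the leading term of \eqref{theorem1} verbatim, while the boundary term $D$ in \eqref{Dterm} is driven below the stated order $o(\NM^{-3+\epsilon})$. For the first term this is immediate: by \eqref{zeta} we have $r^\star-1=\mathcal{L}_{\eta/a}(\NM)=\NM^{-\zeta_{\eta/a}(\NM)}$, so that $\sqrt{\NM}\,(r^\star-1)^2=\NM^{1/2}\NM^{-2\zeta_{\eta/a}(\NM)}=\NM^{-(2\zeta_{\eta/a}(\NM)-1/2)}$, which multiplied by $\tfrac{1}{2}\mu$ is exactly the leading term of \eqref{theorem1}. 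All the remaining work goes into the term $D$.

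The key algebraic observation, and the step that makes $D$ collapse, is the identity that the defining equation \eqref{L_c} imposes on the largest quantization level $y_{_{\NM}}=a(r^\star)^{\NM-1}$. Inserting $r^\star-1=\mathcal{L}_{\eta/a}(\NM)$ into \eqref{L_c} with $c=\eta/a$ and $n=\NM$ yields
\[
(r^\star-1)\,(r^\star)^{\NM-1}=\frac{\eta}{a}
\qquad\Longrightarrow\qquad
(r^\star-1)\,y_{_{\NM}}=\eta ,
\]
and hence $y_{_{\NM}}=\eta/(r^\star-1)=\eta\,\NM^{\zeta_{\eta/a}(\NM)}$ by \eqref{zeta}; in particular $y_{_{\NM}}\to\infty$, consistent with the necessary condition derived before the theorem, and $y_{_{\NM}}=\Theta(\NM^{\zeta_{\eta/a}(\NM)})$. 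Substituting $(r^\star-1)y_{_{\NM}}=\eta$ into \eqref{Dterm} removes the explicit dependence on $r^\star$ and leaves the clean form
\[
D=\frac{1}{y_{_{\NM}}^{2}}\left|\,\eta\,f(y_{_{\NM}})-\bigl(1-F(y_{_{\NM}})\bigr)\right| .
\]

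It then remains to bound $g(y)\defined \eta f(y)-(1-F(y))$ as $y\to\infty$ and to convert that bound into a rate in $\NM$. Here I would invoke Definition~\ref{reg-cond}. Since $f(y)\to0$ and $1-F(y)\to0$, applying L'H\^{o}pital's rule to $(1-F(y))/f(y)$ and using condition~2 gives $\lim_{y\to\infty}(1-F(y))/f(y)=\lim_{y\to\infty}-f(y)/f'(y)=\eta$, so that $g(y)=f(y)\bigl(\eta-(1-F(y))/f(y)\bigr)=o(f(y))$. Condition~3, namely $yf(y)\to0$ (i.e. $f(y)=o(1/y)$), then upgrades this to $g(y)=o(1/y)$, whence
\[
D=\frac{|g(y_{_{\NM}})|}{y_{_{\NM}}^{2}}=o\!\left(y_{_{\NM}}^{-3}\right)
=o\!\left(\NM^{-3\zeta_{\eta/a}(\NM)}\right).
\]
Because $\zeta_{\eta/a}(\NM)\to1$ by \eqref{limit1}, for every fixed $\epsilon>0$ we have $3\zeta_{\eta/a}(\NM)>3-\epsilon$ for all large $\NM$, so this bound is in particular $o(\NM^{-3+\epsilon})$; adding it to the leading term established above gives \eqref{theorem1}.

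I expect the main obstacle to be this last conversion: controlling the rate at which $g$ vanishes and turning the slightly sub-linear growth $y_{_{\NM}}=\eta\,\NM^{\zeta_{\eta/a}(\NM)}$ (with $\zeta_{\eta/a}(\NM)<1$ but tending to $1$) into the exponent $-3+\epsilon$. This is precisely where the arbitrary $\epsilon$ in the statement originates, and where conditions~2 and~3 must be combined, rather than used in isolation. A secondary point requiring care is the justification of the L'H\^{o}pital step — that both numerator and denominator vanish and that $f'$ is controlled near infinity, all of which are supplied by conditions~1--3 — since it is this step that certifies the bracket is genuinely $o(f(y))$ and not merely $O(1)$.
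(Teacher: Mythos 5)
Your proposal is correct and follows essentially the same route as the paper's own proof: the identity $(r^\star-1)\,y_{_{\NM}}=\eta$ extracted from \eqref{L_c}, the L'H\^{o}pital step showing the bracket is $o(1)$, and the use of $y f(y)\rightarrow 0$ to obtain $D=o(y_{_{\NM}}^{-3})=o(\NM^{-3\zeta_{\eta/a}(\NM)})\subseteq o(\NM^{-3+\epsilon})$ all match the argument in Appendix \ref{A_Yek}. No gaps.
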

\begin{proof}
See Appendix \ref{A_Yek}.
\end{proof}

Now, considering the bound in \eqref{theorem1} and noting that $\lim_{n\rightarrow\infty} \zeta_c(n) {=} 1$, we have the following for any $\epsilon>0$:
\be\label{corollary1}
\left\|\nabla_{\mbb{Y}^\star} \mathcal{P}\right\| = O\left(\NM^{-\frac{3}{2}+\epsilon}\right),
\ee
which proves that the uniform codebook $\mbb{Y}^\star$ is asymptotically optimal as $\NM{\rightarrow}\infty$ and furthermore it obtains the best scaling possible within the class of uniform (in dB) codebooks. Such optimality is not affected by the channel magnitude distribution as long as the regularity conditions in Definition \ref{reg-cond} are valid.

It should be noted that the codebook that achieves the optimal scaling is not unique\footnote{For example, the function $\mc{L}(\NM)=\NM^{-1+{\frac{1}{\sqrt{\log\NM}}}}$ can also be shown to achieve the scaling of $O(\NM^{-3/2})$.}. However, the function $\mc{L}_{c}(\NM)$ in \eqref{L_c} with $c=\eta/a$ is specifically defined so that we achieve the residual scaling of $o(\NM^{-3})$ in \eqref{theorem1}. Moreover, this definition provides a close approximation of the optimal magnitude quantization levels as shown in Fig. \ref{Fig2}. The figure compares the uniform (in dB) quantization levels in $\mbb{Y}^\star$ with the optimal quantization levels for a chi-square random variable $Y$ with the pdf
\be\label{chi}
f(y)=\frac{1}{2^{\frac{M}{2}}\Gamma(\frac{M}{2})}y^{\frac{M}{2}-1}e^{-\frac{y}{2}},
\ee
where the integer $M{>}2$ is the distribution parameter and $\Gamma(\cdot)$ is the gamma function. This is in fact the distribution of $\|\bh\|^2$ for a $M$-antenna channel vector $\bh=[h_1,h_2,\cdots,h_M]^T$ where $h_i$'s are independent and distributed according to $\mc{N}(0,1)$. The optimal quantization levels are computed by numerical minimization of \eqref{Pave} with multiple random start points.

\begin{figure}
\centering
\includegraphics[width=3.5in]{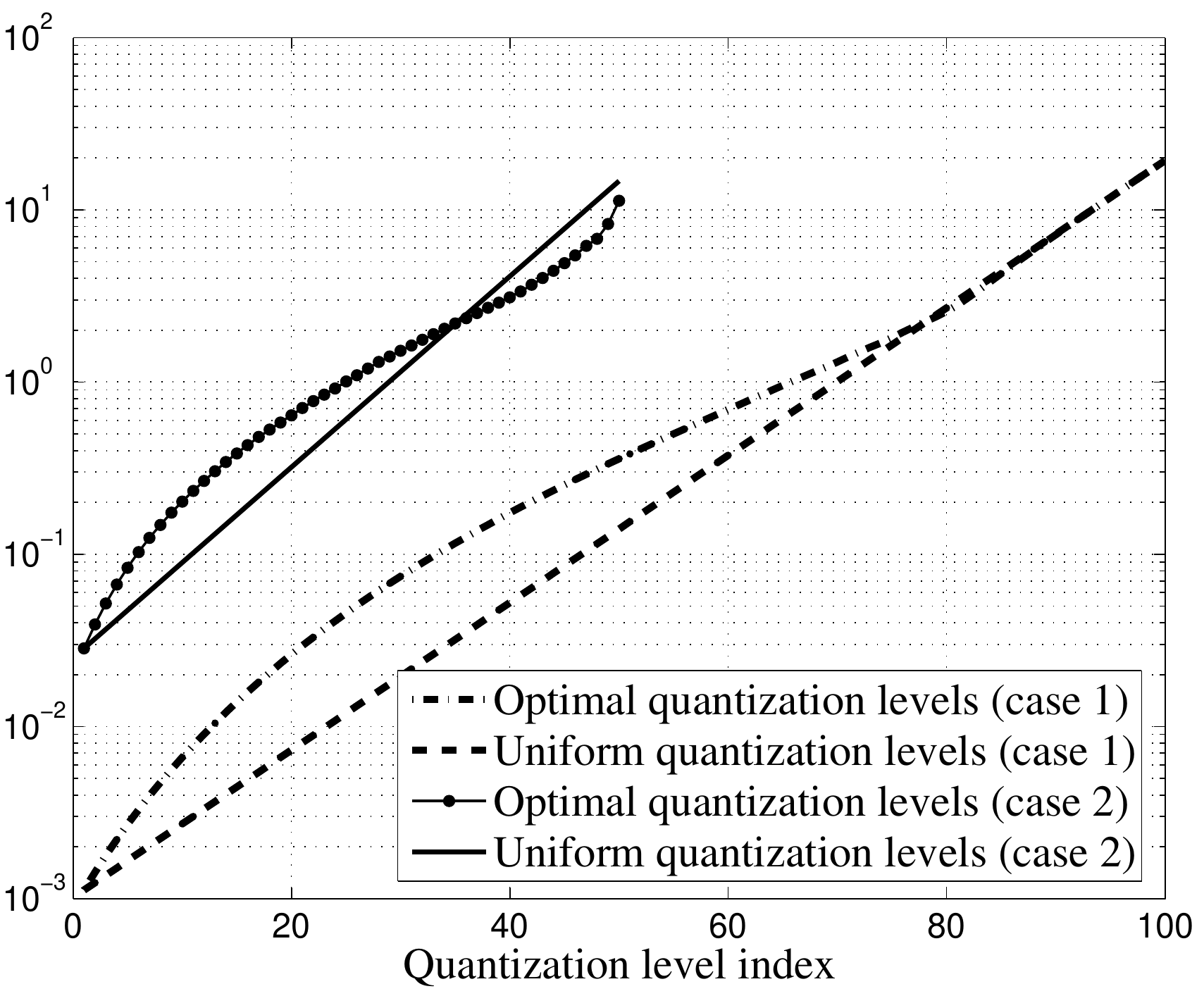}
\caption{Uniform (in dB) magnitude quantization codebook $\mbb{Y}^\star$ vs. the optimal quantization levels for the chi-square random variable $Y$; Case 1) $M{=}3$, $q{=}10^{-5}$, and $\NM{=}100$; Case 2) $M{=}4$, $q{=}10^{-4}$, and $\NM{=}50$}
\label{Fig2}
\end{figure}

In the remainder of this paper, we use the uniform codebook $\mbb{Y}^\star$ for channel magnitude quantization. In order to study the performance of this codebook in terms of the corresponding average transmission power, we present an upper bound on the average normalized transmission power $\mc{P}(\mbb{Y}^\star)=\mathbb{E}[{1}/{\tilde{Y}^\star}]$, where $\mc{P}(\cdot)$ is defined in \eqref{Pave}, and $\tilde{Y}^\star$ is the quantized magnitude for the optimal codebook $\mbb{Y}^\star$ as described in Definition \ref{uniform-mag-code}.

First, assume a perfect CSI system where the transmission power $P(h)$ is given by \eqref{CSI-power}. The normalized average transmission power is therefore given by
\be{\label{rho-CSI-siso}}\rho_{_{\md{SU,CSI}}}=\mathbb{E}\left[\frac{1}{Y}\right]=\int_{0}^{\infty}{\frac{1}{y}f(y) \textrm{d}y},\ee
where the subscript $\md{SU}$ stands for single-user.

The following theorem presents an upper bound on $\mc{P}(\mbb{Y}^\star)=\mathbb{E}[{1}/{\tilde{Y}^\star}]$.

\begin{theorem}\label{T_Do}
Assuming the regularity conditions on the channel distribution function, we have the following for $\NM{>}1$:
\be\label{powercomp}
\mathbb{E}\left[\frac{1}{\tilde{Y}^\star}\right]<\mathbb{E}\left[\frac{1}{Y}\right]\left(1+\NM^{-\zeta_{\eta/a}\left(\NM\right)}+\omega\NM^{-2\zeta_{\eta/a}\left(\NM\right)}\right),
\ee
where $\omega=\frac{\mbb{E}[Y]}{\eta^2\rho_{_{\md{SU,CSI}}}}$.
\end{theorem}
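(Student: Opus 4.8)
The plan is to start from the closed form $\mathbb{E}[1/\tilde{Y}^\star] = \mc{P}(\mbb{Y}^\star) = \sum_{n=1}^{\NM} Q^{(n)}/y^{(n)}$ given in \eqref{Pave}, with levels $y^{(n)} = a(r^\star)^{n-1}$ and $y^{(\NM+1)} = \infty$, and to compare this discrete sum against the integral $\mathbb{E}[1/Y] = \int_0^\infty f(y)/y\,\mathrm{d}y$ cell by cell. The decisive structural observation is that the sum should be split into the $\NM-1$ bounded cells $[y^{(n)}, y^{(n+1)})$ and the single unbounded tail cell $[y^{(\NM)},\infty)$, since the crude ratio estimate that controls the bounded cells blows up on the tail. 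Note also that $\mathbb{E}[1/Y]$ integrates over all of $(0,\infty)$ whereas the sum only covers $[a,\infty)$, which quietly supplies the slack absorbing the outage interval $[0,a)$.

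For a bounded cell I would use that $y \le y^{(n+1)} = r^\star y^{(n)}$ throughout $[y^{(n)}, y^{(n+1)})$, hence $1/y^{(n)} \le r^\star/y$ there. Writing $Q^{(n)}/y^{(n)} = \int_{y^{(n)}}^{y^{(n+1)}} f(y)/y^{(n)}\,\mathrm{d}y$ and inserting this pointwise inequality gives $Q^{(n)}/y^{(n)} < r^\star \int_{y^{(n)}}^{y^{(n+1)}} f(y)/y\,\mathrm{d}y$. Summing over $n = 1,\dots,\NM-1$, the bounded cells tile $[a, y^{(\NM)}]$, so the partial sum is bounded by $r^\star \int_a^{y^{(\NM)}} f(y)/y\,\mathrm{d}y \le r^\star\,\mathbb{E}[1/Y]$. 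Since $r^\star = 1 + \mathcal{L}_{\eta/a}(\NM) = 1 + \NM^{-\zeta_{\eta/a}(\NM)}$ by \eqref{ystar} and \eqref{zeta}, this already produces the first-order overhead term $\NM^{-\zeta_{\eta/a}(\NM)}\mathbb{E}[1/Y]$.

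The tail cell contributes $Q^{(\NM)}/y^{(\NM)} = \bigl(1 - F(y^{(\NM)})\bigr)/y^{(\NM)} = P(Y \ge y^{(\NM)})/y^{(\NM)}$. Here I would invoke the third regularity condition of Definition \ref{reg-cond}, which guarantees $\mathbb{E}[Y] < \infty$, and apply Markov's inequality to get $P(Y \ge y^{(\NM)}) \le \mathbb{E}[Y]/y^{(\NM)}$, so that this term is at most $\mathbb{E}[Y]/(y^{(\NM)})^2$. The computation that makes everything close is the explicit evaluation of the top level: by the defining equation \eqref{L_c} with $c = \eta/a$ one has $\mathcal{L}_{\eta/a}(\NM)\,(r^\star)^{\NM-1} = \eta/a$, whence $y^{(\NM)} = a(r^\star)^{\NM-1} = \eta/\mathcal{L}_{\eta/a}(\NM) = \eta\,\NM^{\zeta_{\eta/a}(\NM)}$. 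Substituting gives a tail bound of $(\mathbb{E}[Y]/\eta^2)\,\NM^{-2\zeta_{\eta/a}(\NM)}$, which is exactly $\omega\,\NM^{-2\zeta_{\eta/a}(\NM)}\mathbb{E}[1/Y]$ for $\omega = \mathbb{E}[Y]/(\eta^2\rho_{\mathrm{SU,CSI}})$. Adding the bounded and tail contributions yields the claim.

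The main obstacle is the tail cell: the multiplicative argument used on the bounded cells fails on $[y^{(\NM)},\infty)$ because $y/y^{(\NM)}$ is unbounded, so a genuinely different estimate (Markov, relying on the finite-mean regularity condition) is needed, and it only yields the advertised $\NM^{-2\zeta_{\eta/a}(\NM)}$ rate because the codebook parameter $c = \eta/a$ was engineered in Definition \ref{D_Yek} precisely so that $y^{(\NM)}$ grows like $\eta\,\NM^{\zeta_{\eta/a}(\NM)}$. Verifying this identity and checking that the strict inequalities (from the positivity of $f$ on cells of positive width, and from Markov) survive the summation are the only points requiring care; everything else is routine monotone integration.
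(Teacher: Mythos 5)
Your proposal is correct and follows essentially the same route as the paper: the paper bounds the first $\NM-1$ cells by comparing the step function $1/y^{(n+1)}$ against $1/y$ under the integral (its ``shaded area'' $\mc{A}$), which after multiplying by $r^\star$ is exactly your pointwise estimate $1/y^{(n)} < r^\star/y$ on each bounded cell, and it handles the tail cell identically via Markov's inequality together with the engineered identity $(r^\star-1)y^{(\NM)}=\eta$. No gaps; the only difference is presentational.
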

\begin{proof}
See Appendix \ref{A_Yek}.
\end{proof}


This result is used in the subsequent sections for optimizing the magnitude-direction product quantization codebooks. In Section \ref{S_Se}, we present the general formulation for the optimal design of the limited feedback system and describe its robust variant. By considering the robust formulation, Section \ref{S_Char} addresses the product channel quantization codebook optimization and presents the asymptotic bit allocation laws for the channel magnitude and direction quantization.

\section{Vector Channel Quantization: The General Form} \label{S_Se}

Consider a single-user MISO channel $\mbf{h}{\in} \mathds{R}^M$, where $M$ is the number of base station antennas. The input signal $s$ and the output signal $r$ are related as
\be\label{miso} r=\sqrt{P(\bh)}~{\bh^T\bv(\bh)} s + z, \ee
where $P(\bh)$ and $\bv(\bh)$ are the transmission power and the unit-norm transmit beamforming vector and $z\sim\mc{N}(0,1)$ is the receiver noise. The received SNR for this setup is given by $P(\mbf{h})\left|\mbf{h}^T\mbf{v(h)}\right|^2$.

Assume a limited feedback system with perfect CSI at the receiver and a feedback link with capacity $B$ bits per fading block. The receiver quantizes the channel $\bh$ and sends back the corresponding $B$-bit quantization region index. The transmitter then determines the transmission power and the beamforming vector using the quantized information.

We are interested in a system design that minimizes the average transmission power subject to a target outage probability $q$. The outage probability is defined as the probability that the received SNR drops below a target received SNR $\gamma{_{_0}}$.

Similar to the scalar channel quantization, the system design is a two-fold problem: 1) finding the optimal transmission power and beamforming vector given the channel quantization codebook; 2) optimization of the channel quantization codebook itself. In the following, we present the general form for such system design and present our approach in transforming it into a robust design problem.

We start by some definitions:
\begin{definition} \label{vec-code} By a \emph{vector channel quantization codebook} $\mc{C}$ of size $N$, we mean a partition of $\ds{R}^M$ into $N$ disjoint \emph{quantization regions} $S^{(n)}$, $1{\leq} n{\leq} N$:\[\mc{C}{=}\{S^{(1)},S^{(2)},\cdots,S^{(N)}\}.\] For every quantization codebook $\mc{C}$, we also define a \emph{quantization function} \[\mc{S}(\mbf{h}):\ds{R}^M\rightarrow\mc{C},\] which returns the quantization region that $\mbf{h}\in\ds{R}^M$ belongs to.
\end{definition}

For a given total number of feedback quantization bits $B$, the target SNR $\gamma_{_0}$, and the target outage probability $q$, the system design problem is formulated as follows:
\begin{IEEEeqnarray}{lll}\label{single-user-general-opt}
&\min\limits_{\substack{\mc{C},\\P(\sh),\\\bv(\sh)}} &\mathbb{E}_{\mbf{h}}\left[P(\sh)\right]\\
&\textmd{s.t.}&\md{prob}\left[P(\sh)\left|\mathbf{h}^{T}\mathbf{v}(\sh)\right|^2 < \gamma_{_0}\right] \leq q,\label{single-user-general-opt-const}\\
&& |\mc{C}|=N=2^B,\nonumber
\end{IEEEeqnarray}
where $N$ is the quantization codebook size and the optimization is over the codebook $\mc{C}$, the power control function $P(\sh):\mc{C} \rightarrow \ds{R}_{+}$, and the beamforming function $\bv(\sh):\mc{C}\rightarrow \mathfrak{U}_M$, where $\mathfrak{U}_M$ is the unit hypersphere in $\ds{R}^M$.

An exact solution to this problem is intractable and the numerical optimization approach is rather involved \cite{globecom}. Our approach for solving this problem is a suboptimal one where we fix an outage region of volume $q$ and design the system such that the target SNR is guaranteed for all other regions, which we refer to as  \emph{no-outage regions}. With this approach, the outage probability constraint in \eqref{single-user-general-opt-const} is replaced with SINR constraints that should be satisfied for all no-outage regions and therefore the problem in \eqref{single-user-general-opt} transforms to a robust optimization problem.

To make this rigorous, define the outage region $\mc{O}\subset\mc{C}$ such that $\md{prob}[\sh\in\mc{O}]=q$ and let the random variable $I(\bh)$ be the \emph{no-outage flag}, i.e. $I(\bh)=\mc{I}(\sh\in\mc{O}^c)$, where the logic true function $\mc{I}(\cdot)$ is $1$ if its logical argument is true and it is $0$ otherwise.

For a robust system design, we need to design the codebook, the power control function, and beamforming function in a robust manner such that the target SNR is guaranteed whenever $I(\bh)=1$:
\begin{IEEEeqnarray}{lll}\label{miso-robust-design}
&\min\limits_{\substack{\mc{C},\\P(\sh),\\\bv(\sh)}} &\mathbb{E}_{\mbf{h}}\left[P(\sh)\right]\\
&\textmd{s.t.}& \inf_{\bw\in S(\bh)}{P(\sh)\left|\mathbf{w}^{T}\mathbf{v}(\sh)\right|^2} \geq \gamma_{_0} I(\bh), \nonumber\\&&~~~~~~~~~~~~~~~~~~~~~~~~~~~~~~~~\md{for all}~ \mbf{h}\in\ds{R}^{M}.\label{miso-robust-constraint}
\end{IEEEeqnarray}
Note that by including the no-outage flag in the constraint \eqref{miso-robust-constraint}, this formulation returns $P(\sh){=}0$ if the receiver is in outage, i.e. $I(\bh)=0$. Also note that the outage constraint, $\md{prob}[I(\bh)=0]=q$, is implicit in this formulation.

It should be noted that the constraint \eqref{miso-robust-constraint} will not be feasible if the quantization region $S(\bh)$ includes the zero vector $\bw=0$ (unless $I(\bh)=0$). This means that the outage region $\mc{O}$ should at least include the quantization region in $\mc{C}$ that encompasses the origin.

For the robust design problem in \eqref{miso-robust-design}, the optimal power control function can be directly computed using the constraint \eqref{miso-robust-constraint} as follows:
\be\label{power-cont-fnc} P_{_{\md{SU}}}\defined P(\sh)=\frac{\gamma_{_0} I(\bh)}{\inf\limits_{\bw\in S(\bh)}{\left|\mathbf{w}^{T}\mathbf{v}(\sh)\right|^2}},\ee
where the subscript $\md{SU}$ stands for \emph{single-user}. The problem in \eqref{miso-robust-design} therefore simplifies to the following problem:
\begin{IEEEeqnarray}{lll}\label{simple-miso-robust-design}
&\min\limits_{\mc{C},\bv(\sh)} &\mathbb{E}_{\mbf{h}}\left[\frac{\gamma_{_0} I(\bh)}{\inf_{\bw\in S(\bh)}{\left|\mathbf{w}^{T}\mathbf{v}(\sh)\right|^2}}\right],
\end{IEEEeqnarray}
where the optimization is over the codebook $\mc{C}$ and the beamforming function $\bv(\sh)$.

The problem in \eqref{simple-miso-robust-design} cannot be solved without a specific assumption on the geometry of the quantization regions. In the next section, we further simplify this problem by assuming a magnitude-direction product structure for the channel quantization codebook.

\section{Product Quantization Codebook Design and Optimization} \label{S_Char}
In this section we present the structure of the product channel quantization codebook comprising a channel magnitude and a channel direction quantization codebook and study the optimal bit allocation between these two codebooks.

\subsection{Product Codebook Structure} \label{S_Char-A}

\subsubsection{Channel Magnitude Quantization} \label{S_Char-A-1}
The structure of the channel magnitude quantization is exactly the same as the structure described in Section \ref{S_Do}, with the difference that the magnitude variable is $Y=\|\bh\|^2$ instead of $Y=|h|^2$.

Let $\dot{\mc{C}}$ denote the set of quantization intervals for $\|\bh\|=\sqrt{Y}$:
\be\label{M-code-siso}\dot{\mc{C}}=\left\{J^{(1)},J^{(2)},\cdots,J^{(\NM)}\right\},\ee
where $J^{(n)}{=}[\sqrt{y^{(n)}},\sqrt{y^{(n+1)}})$. Here $y^{(n)}$'s are the quantization levels for $Y=\|\bh\|^2$, $\NM$ is the magnitude codebook size, and $y^{(\NM+1)}{\defined}\infty$. Similar to Section \ref{S_Do}, the first quantization level is fixed as follows:
\be\label{y1-new} y^{(1)}=F^{-1}(q),\ee where $F^{-1}(\cdot)$ is the inverse cdf of $Y=\|\bh\|^2$.

According to Theorem \ref{T_Do}, if we use the optimal uniform codebook $\mbb{Y}^\star$ for magnitude quantization, we have the following for $\NM{>}1$:
\be\label{PavescaleAgain}
\mathbb{E}\left[\frac{1}{\tilde{Y}^\star}\right]<\rho_{_{\md{SU,CSI}}} \left(1+\NM^{-\zeta_{\eta/a}\left(\NM\right)}+\omega\NM^{-2\zeta_{\eta/a}\left(\NM\right)}\right),
\ee
where $\rho_{_{\md{SU,CSI}}}$ is defined in \eqref{rho-CSI-siso} with $Y=\|\mbf{h}\|^2$.

\subsubsection{Channel Direction Quantization} \label{S_Char-A-2}
We use a $M$-dimensional Grassmannian codebook $\mbb{U}$ of size $\ND$ for direction quantization:  \be\label{grass-code-su} \mbb{U}=\left\{\bu^{(1)},\bu^{(2)},\cdots,\bu^{(\ND)}\right\},\ee where $\bu^{(n)}$ vectors are $M$-dimensional unit-norm Grassmannian codewords.

The direction quantization regions are formed by mapping each channel vector $\bh$ to a vector $\tilde{\bu}(\bh)\in\mbb{U}$ that has the smallest angle with $\bh$:
\be\label{dir-quant-vec} \tilde{\bu}(\bh)=\arg\min_{\bu\in\mbb{U}} \angle{(\bh,\bu)}.\ee
The vector $\tilde{\bu}(\bh)$ is referred to as the \emph{quantized direction} for the channel realization $\bh$. The corresponding quantization regions, according to the Gilbert-Varshamov bound argument \cite{Barg}, can be covered by the following spherical caps:
\be\label{dir-quant-regions-su} \ddot{\mc{C}}=\left\{B^{(1)},B^{(2)},\cdots,B^{(\ND)} \right\} \ee
where \[B^{(n)}=\left\{\mbf{w}\in\mathfrak{U}_M\left|\angle{(\mbf{w},\bu^{(n)})}<\phi\right.\right\}.\] Here $\mathfrak{U}_M$ is the unit hypersphere in $\ds{R}^M$, $B^{(n)}$ is the spherical cap around $\bu^{(n)}$ and \[\phi=\arcsin{\delta}\] is the angular opening of the caps, where $\delta$ is the minimum chordal distance of $\mbb{U}$.

Covering the direction quantization regions with these spherical cap regions allows for a closed form solution for \eqref{simple-miso-robust-design}. Moreover such enlargement of the regions increases the required transmission power and will result in an upper bound for the average transmission power.

In order to describe the dependence between the angular opening $\phi$ and the direction codebook size $\ND$, we need the following lemma:

\begin{lemma}\label{L_Do} For a $M$-dimensional real Grassmannian codebook of size $\ddot{N}$ and minimum chordal distance $\delta$, we have the following for large enough $\ddot{N}$:
\be\label{Hamming-su} \delta<4\lambda_M\ddot{N}^{-\frac{1}{M-1}}, \ee
where $\lambda_M=\left(\frac{\sqrt{\pi}\Gamma((M+1)/2)}{\Gamma({M/2})}\right)^{\frac{1}{M-1}}$.
\end{lemma}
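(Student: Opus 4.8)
The stated inequality is a sphere‑packing (Hamming‑type) upper bound, so the plan is to convert the minimum‑chordal‑distance condition into a minimum angular separation on the unit sphere $S^{M-1}$ and then compare the total volume of a family of disjoint geodesic caps with the volume of the sphere. Since $\angle(\bu,\bv)=\arccos|\bu^T\bv|\in[0,\pi/2]$ and the chordal distance between two lines is $\sin\angle(\cdot,\cdot)$, the minimum chordal distance $\delta$ corresponds to a minimum angular separation $\theta_{\min}=\arcsin\delta$ among the codewords (the sine being increasing on $[0,\pi/2]$). To remove the antipodal ambiguity of the Grassmann manifold $G(M,1)$, I would first lift the codebook $\mbb{U}$ to the $2\ddot{N}$ antipodal points $\{\pm\bu^{(n)}\}_{n=1}^{\ddot{N}}\subset S^{M-1}$; a short check shows these have minimum pairwise geodesic (angular) distance exactly $\theta_{\min}$.

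First I would invoke the triangle inequality for the geodesic metric: since the $2\ddot{N}$ lifted points are pairwise at angular distance at least $\theta_{\min}$, the geodesic caps of angular radius $\theta_{\min}/2$ centered at them are pairwise disjoint. Summing their areas and comparing with the sphere gives the packing inequality
\be
2\ddot{N}\,A(\theta_{\min}/2)\le \mathrm{Vol}(S^{M-1})=\frac{2\pi^{M/2}}{\Gamma(M/2)},
\ee
where $A(\rho)=\frac{2\pi^{(M-1)/2}}{\Gamma((M-1)/2)}\int_0^\rho \sin^{M-2}\psi\,\du\psi$ is the area of a cap of angular radius $\rho$.

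Next I would extract the leading-order behaviour of the cap area. For large $\ddot{N}$ the packing forces $\theta_{\min}$ to be small, so with $\rho=\theta_{\min}/2\to0$ one has $\int_0^\rho\sin^{M-2}\psi\,\du\psi=\frac{\rho^{M-1}}{M-1}(1+o(1))$. Substituting this, and using the reduction $(M-1)\Gamma((M-1)/2)=2\Gamma((M+1)/2)$ to consolidate the Gamma factors, the inequality rearranges to
\be
\left(\frac{\theta_{\min}}{2}\right)^{M-1}\le \frac{\sqrt{\pi}\,\Gamma((M+1)/2)}{\Gamma(M/2)}\cdot\frac{1}{\ddot{N}}\,(1+o(1))=\lambda_M^{M-1}\,\ddot{N}^{-1}(1+o(1)).
\ee
Taking $(M-1)$‑th roots gives $\theta_{\min}\le 2\lambda_M\,\ddot{N}^{-1/(M-1)}(1+o(1))$, and since $\delta=\sin\theta_{\min}\le\theta_{\min}$, the claimed bound $\delta<4\lambda_M\ddot{N}^{-1/(M-1)}$ follows for all sufficiently large $\ddot{N}$. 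The extra factor of $2$ (namely $4$ against the true asymptotic constant $2$) is precisely the slack needed to absorb the $(1+o(1))$ correction and make the inequality strict.

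The main obstacle is controlling the sub‑leading terms in the cap‑area expansion: I must upgrade the asymptotic equality $\int_0^\rho\sin^{M-2}\psi\,\du\psi=\frac{\rho^{M-1}}{M-1}(1+o(1))$ into an explicit lower bound valid for all $\rho$ below a fixed threshold, so that the final inequality is genuinely strict once $\ddot{N}$ is large. Using $\sin\psi\ge\psi-\psi^3/6$ renders the correction as an explicit $1-O(\rho^2)$ factor and shows that the residual is comfortably dominated by the slack built into the constant $4$. A secondary point demanding care is the antipodal bookkeeping — confirming that the lifted set of $2\ddot{N}$ points really has minimum angular separation $\theta_{\min}$, and that the doubling of the point count is exactly what converts the projective volume into the full sphere volume used above — but this is routine once the lift is fixed.
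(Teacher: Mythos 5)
Your proof is correct and follows essentially the same route as the paper's: a Hamming/sphere-packing argument on the real unit sphere with disjoint caps of angular radius on the order of $\delta/2$, whose area is lower-bounded via $\int_0^\rho \sin^{M-2}\psi\,\mathrm{d}\psi \gtrsim \rho^{M-1}/(M-1)$, the factor $4$ (versus the asymptotic constant $2$) absorbing the small-angle corrections exactly as the paper's choice $\epsilon=\tfrac{1}{2}$ does. The only differences are cosmetic: you make the antipodal lift and the cap-disjointness explicit and retain the resulting factor of $2$ in the point count, which the paper leaves implicit by citing the Hamming bound (and in fact discards, using a slightly weaker packing inequality), and you center the caps at angular radius $\arcsin(\delta)/2$ rather than $\arcsin(\delta/2)$, which is immaterial asymptotically.
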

\begin{proof}
The proof is based on the Hamming bound as in \cite{heath}, with the difference that, unlike the complex space, closed-form expressions do not exist for the surface area of the real spherical caps (see Appendix \ref{A_Do}).
\end{proof}

We therefore have the following for large enough $\ddot{N}$:
\be\label{phi-bound-su} \sin{\phi}<4\lambda_M\ddot{N}^{-\frac{1}{M-1}}.\ee
This bound, along with the bound in \eqref{PavescaleAgain}, is used in Section \ref{S_Char-B} for optimizing the product codebook structure.

\subsubsection{Outage Region and Product Codebook Structure} \label{S_Char-A-3}
According to the robust design argument in Section \ref{S_Se}, we have to specify and fix an outage region $\mc{O}$ such that $\md{prob}[\bh\in\mc{O}]=q$, where $q$ is the target outage probability. For this purpose, we set the outage region to be ball centered at origin:
\be\label{out-reg-su} \mc{O}=\left\{\bh\left|\|\bh\| < \sqrt{y^{(1)}}\right.\right\}, \ee
where $y^{(1)}$ is the first quantization level, which is fixed according to \eqref{y1-new}.

Now, the product channel quantization regions are described as follows:
\be\label{product-code-su} \mc{C}=(\dot{\mc{C}}\times\ddot{\mc{C}}) \cup \mc{O}, \ee
where $\dot{\mc{C}}$ includes the magnitude quantization regions in \eqref{M-code-siso} and $\ddot{\mc{C}}$ includes the direction quantization regions in \eqref{dir-quant-regions-su}. Each quantization region in \eqref{product-code-su}, except the outage ball, is schematically a sector-type region as shown in Fig. \ref{Fig3}. The variable $\tilde{Y}$ in the figure is the quantized magnitude variable as defined in \eqref{quant-mag}. Finally, noting \eqref{product-code-su}, the size of the product quantization codebook $\mc{C}$ is related to the magnitude and direction quantization codebook sizes as follows:
\be |\mc{C}|=\NM\ND+1.\ee

\begin{figure}
\centering
\includegraphics[width=1.7in]{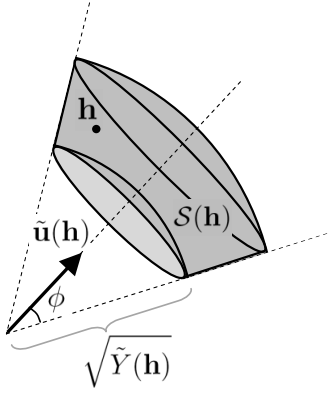}
\caption{Sector-type product quantization region $\mc{S}(\bh)$ for a channel realization $\bh$.}
\label{Fig3}
\end{figure}

Given the structure of the channel quantization codebook, the next section addresses the product codebook optimization and derives the bit allocation law between the channel magnitude and direction quantization codebooks.

\subsection{Product Codebook Optimization and the Optimal Bit Allocation Law} \label{S_Char-B}
As described in Section \ref{S_Se}, we are interested in a robust system design that minimizes the average transmission power subject to a target outage probability as formulated in \eqref{miso-robust-design}. The system design includes optimizing the power control function, the beamforming function, and the quantization codebook structure itself. The optimal power control function is given by \eqref{power-cont-fnc} in Section \ref{S_Se}. This section addresses the optimal beamforming function and the optimal product codebook structure.

First, fix the codebook $\mc{C}$. Consider the optimization of the objective function in \eqref{simple-miso-robust-design} over the beamforming function $\bv(\mc{S}(\bh))$. For a channel realization $\bh\in\mc{S}(\bh)$, the optimal transmit beamforming vector, according to \eqref{simple-miso-robust-design}, is given by
\be\label{opt-beam-vec-miso} \bv(\mc{S}(\bh))=\arg\max_{\bv\in\mathfrak{U}_M}\inf_{\bw\in S(\bh)}{\left|\mathbf{w}^{T}\mathbf{v}\right|^2}.\ee
It can be shown that the optimal beamforming vector is in fact the quantized direction itself:
\be\label{opt-vec-quant-dir} \bv(\mc{S}(\bh))=\tilde{\bu}(\bh).\ee
This is expected noting the symmetry of the quantization region around $\tilde{\bu}(\bh)$ as shown in Fig. \ref{Fig3}.

For the beamforming function $\bv(\mc{S}(\bh))=\tilde{\bu}(\bh)$, it is easy to verify that
\be\label{inf-term} \inf_{\bw\in S(\bh)}{\left|\mathbf{w}^{T}\mathbf{v}(\mc{S}(\bh))\right|^2}=\inf_{\bw\in S(\bh)}{\left|\mathbf{w}^{T}\tilde{\mathbf{u}}(\bh)\right|^2}=\tilde{Y}(\bh) \cos^2\phi.\ee
By substituting this in \eqref{power-cont-fnc} we have
\be\label{power-cont-fnc2} P_{_{\md{SU}}}=\frac{\gamma_{_0}I(\bh)}{\tilde{Y}(\bh) \cos^2\phi}.\ee
Note that $I(\bh){=}1$ for $\bh{\not\in}\mc{O}$. By taking the expectation of the both sides of \eqref{power-cont-fnc2} we obtain the following expression for the average transmission power:
\be\label{P-su} \mbb{E}[P_{_{\md{SU}}}]=\frac{\gamma_{_0}}{\cos^2\phi}\mbb{E}\left[\frac{1}{\tilde{Y}}\right],\ee
where $\tilde{Y}$ is the quantized magnitude as defined in \eqref{quant-mag}.

We are now ready to optimize the product codebook for robust system design. For this purpose, we present an upper bound for $\mbb{E}[P_{_{\md{SU}}}]$ in \eqref{P-su} and use it to formulate the product codebook optimization. The computations are asymptotic in the codebook sizes, i.e. we assume $\NM,\ND\gg 1$ throughout the optimization

The expression $\mbb{E}[{1}/{\tilde{Y}}]$ in \eqref{P-su} can be minimized by using the optimal uniform (in dB) magnitude codebook $\mbb{Y}^\star$. By setting $\tilde{Y}=\tilde{Y}^\star$ in \eqref{P-su} and using the bound in \eqref{PavescaleAgain} and also the bound on $\phi$ in \eqref{phi-bound-su}, we obtain the following upper bound for the average transmission power:
\begin{IEEEeqnarray}{lcl}\label{approx2}
\frac{\mbb{E}[P_{_{\md{SU}}}]}{P_{_{\md{SU,CSI}}}}~&<&~\frac{1+\NM^{-\zeta_{\eta/a}\left(\NM\right)}+\omega\NM^{-2\zeta_{\eta/a}\left(\NM\right)}}
{1-16\lambda_M^2\ddot{N}^{-\frac{2}{M-1}}}\nonumber\\
&\stackrel{\md{(a)}}{\approx}&~ \frac{1+\NM^{-1}}{1-16\lambda_M^2\ddot{N}^{-\frac{2}{M-1}}} \nonumber\\
&\stackrel{\md{(b)}}{\approx}&~ {1+\NM^{-1}+16\lambda_M^2\ddot{N}^{-\frac{2}{M-1}}},
\end{IEEEeqnarray}
where $P_{_{\md{SU,CSI}}}\defined \gamma_{_0}\rho_{_{\md{SU,CSI}}}$. The approximation (a) holds since according to \eqref{limit1} \[\lim_{\NM\rightarrow\infty}{\zeta_{\eta/a}(\NM)}=1.\] Also the approximation (b) holds since we assume $\NM,\NM\gg 1$.

For a total number of feedback bits $B$, we have the following constraint on the magnitude and direction codebook sizes:
\[ \NM\ND + 1 = N =  2^B,\]
where $N$ is the size of the product quantization codebook $\mc{C}$. Since the computations are asymptotic in $\NM$ and $\ND$, we approximate this constraint as
\[\NM\ND = 2^B.\]

The codebook size optimization therefore simplifies to the following problem:
\begin{IEEEeqnarray}{lll}\label{size-optim-su}
&\min_{\NM,\ND}~~& \NM^{-1}+16\lambda_M^2\ddot{N}^{-\frac{2}{M-1}}\\
&\md{s.t.}~~&  \NM \ND = 2^B. \nonumber
\end{IEEEeqnarray}
The following theorem gives the optimal channel magnitude and direction codebook sizes:

\begin{theorem}\label{T_Se}
Let $\BM\defined\log(\NM)$ and $\BD\defined\log(\ND)$ denote respectively the number of quantization bits assigned to channel magnitude and channel direction quantization. By solving \eqref{size-optim-su} the optimal number of quantization bits are given by
\begin{IEEEeqnarray}{ccc}\label{opt-bit-su}
\BM &=& \frac{2}{M+1}B-\kappa_{_{\md{SU}}}\\
\BD &=& \frac{M-1}{M+1}B +\kappa_{_{\md{SU}}},\label{opt-bit-su2}
\end{IEEEeqnarray}
where $\kappa_{_{\md{SU}}}\defined\frac{M-1}{M+1}\log{\frac{{32}\lambda_M^2}{M-1}}$.
\end{theorem}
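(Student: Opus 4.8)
The plan is to treat \eqref{size-optim-su} as an equality-constrained minimization, eliminate the constraint to obtain a one-dimensional problem, and exploit convexity so that a single first-order condition pins down the optimum. First I would pass to the bit-count variables $\BM=\log\NM$ and $\BD=\log\ND$, under which the multiplicative constraint $\NM\ND=2^B$ becomes the linear relation $\BM+\BD=B$. Solving $\BD=B-\BM$ and substituting $\NM=2^{\BM}$, $\ND=2^{B-\BM}$ into the objective reduces the problem to minimizing the single-variable function
\[
g(\BM)=2^{-\BM}+16\lambda_M^2\,2^{-\frac{2}{M-1}(B-\BM)}
\]
over $\BM\in(0,B)$.

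Next I would argue that $g$ is strictly convex in $\BM$: the first term $2^{-\BM}$ is a decreasing exponential and hence convex, while the second term is a positive constant times the increasing exponential $2^{\frac{2}{M-1}\BM}$, which is likewise convex, so their sum is convex. This guarantees that any stationary point is the unique global minimizer, and it therefore suffices to set $g'(\BM)=0$. Differentiating the two exponentials (each producing a factor $\ln 2$ that cancels) and equating to zero yields the balance condition
\[
2^{-\BM}=\frac{32\lambda_M^2}{M-1}\,2^{-\frac{2}{M-1}(B-\BM)},
\]
which simply says the two competing terms should be equalized up to the constant factor $\frac{32\lambda_M^2}{M-1}$.

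Taking $\log$ of both sides linearizes this in $\BM$; collecting the $\BM$ contributions produces the coefficient $\frac{M+1}{M-1}$, and solving gives $\BM=\frac{2}{M+1}B-\frac{M-1}{M+1}\log\frac{32\lambda_M^2}{M-1}$, i.e. $\BM=\frac{2}{M+1}B-\kappa_{_{\md{SU}}}$. Substituting this back into $\BD=B-\BM$ immediately delivers $\BD=\frac{M-1}{M+1}B+\kappa_{_{\md{SU}}}$, reproducing \eqref{opt-bit-su}--\eqref{opt-bit-su2}.

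The algebra here is entirely routine, so the only points that genuinely require care are establishing the convexity of $g$ (so that the stationary point is certified as the global minimum rather than a saddle or maximum), and recalling that because the whole derivation is asymptotic in $\NM,\ND$ the bit counts are treated as continuous reals, with the integer-rounding gap absorbed into the approximation. I would also note in passing that for large $B$ both optimal values are positive, since the leading terms $\frac{2}{M+1}B$ and $\frac{M-1}{M+1}B$ dominate the additive constant $\kappa_{_{\md{SU}}}$, so the solution lies in the admissible interior and the asymptotic regime $\NM,\ND\gg 1$ underlying \eqref{approx2} is self-consistent.
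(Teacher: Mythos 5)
Your proposal is correct and the algebra checks out: the stationarity condition $2^{-\BM}=\frac{32\lambda_M^2}{M-1}2^{-\frac{2}{M-1}(B-\BM)}$ does yield $\BM=\frac{2}{M+1}B-\kappa_{_{\md{SU}}}$ and $\BD=B-\BM$ as claimed. The paper's proof is a one-line appeal to Lagrange multipliers, which leads to the same first-order balance condition, so your route (eliminating the constraint and verifying convexity of the resulting single-variable exponential sum) is essentially the same argument, with the added benefit that the convexity check certifies the stationary point as the global minimum.
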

\begin{proof}
The optimal number of bits are derived simply by applying the Lagrange multipliers method to the problem in \eqref{size-optim-su}.
\end{proof}

\begin{corollary}\label{C_Se}
For a single user system with $M$ antennas at the base station, the number of magnitude and direction quantization bits are related as follows:
\be\label{su-mag-dir-res} \BD=\frac{M-1}{2}\BM+\frac{M+1}{2}\kappa_{_{\md{SU}}},\ee
where $\kappa_{_{\md{SU}}}$ is defined in Theorem \ref{T_Se}. In the asymptotic regime of $B\rightarrow\infty$, the optimal number of magnitude and direction quantization bits are therefore related as \be\label{asym-res} \BD=\frac{M-1}{2}\BM.\ee
\end{corollary}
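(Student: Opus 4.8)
The plan is to derive \eqref{su-mag-dir-res} by purely algebraic elimination of the total bit budget $B$ from the two expressions furnished by Theorem \ref{T_Se}. First I would solve \eqref{opt-bit-su} for $B$, obtaining $B = \frac{M+1}{2}\left(\BM + \kappa_{_{\md{SU}}}\right)$. Substituting this into \eqref{opt-bit-su2} gives $\BD = \frac{M-1}{M+1}\cdot\frac{M+1}{2}\left(\BM + \kappa_{_{\md{SU}}}\right) + \kappa_{_{\md{SU}}}$, and collecting the $\kappa_{_{\md{SU}}}$ terms through the identity $\frac{M-1}{2}+1 = \frac{M+1}{2}$ yields \eqref{su-mag-dir-res} directly.

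For the asymptotic relation \eqref{asym-res}, I would note that $\kappa_{_{\md{SU}}} = \frac{M-1}{M+1}\log\frac{32\lambda_M^2}{M-1}$ is a fixed constant depending only on the antenna count $M$ and not on $B$. Since \eqref{opt-bit-su} gives $\BM = \frac{2}{M+1}B - \kappa_{_{\md{SU}}} \to \infty$ as $B\to\infty$, dividing \eqref{su-mag-dir-res} by $\BM$ produces $\BD/\BM = \frac{M-1}{2} + \frac{(M+1)\kappa_{_{\md{SU}}}}{2\BM}$, whose second summand vanishes in the limit. Hence the leading-order bit-allocation law \eqref{asym-res} follows.

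There is no substantive obstacle here; the entire content is elementary manipulation of the two linear relations in Theorem \ref{T_Se}. The only point worth verifying is that $\kappa_{_{\md{SU}}}$ does not itself scale with $B$, since otherwise the limiting ratio could differ. Its closed form confirms dependence on $M$ alone, so the factor $\frac{M-1}{2}$ emerges unambiguously as the dominant relation between the magnitude and direction bit counts.
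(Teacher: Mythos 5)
Your proposal is correct and follows exactly the route the paper intends: the corollary is an immediate algebraic consequence of eliminating $B$ between \eqref{opt-bit-su} and \eqref{opt-bit-su2}, and the asymptotic relation \eqref{asym-res} follows since $\kappa_{_{\md{SU}}}$ is a constant in $M$ while $\BM\rightarrow\infty$ with $B$. The paper states the corollary without an explicit proof precisely because this elementary manipulation is the whole content.
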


It should be noted that the codebook optimization problem, as described in this section, can be viewed as a classical constrained quantization codebook design problem if one considers the average transmission power in \eqref{miso-robust-design} as the quantization distortion measure. The relative bit allocation law in \eqref{asym-res} can specifically compared with the results in \cite{Hamkins}, which studies the optimal shape (channel direction) and gain (channel magnitude) quantization with the mean squared error (MSE) distortion measure. The authors show that, with the MSE measure, the asymptotic number of shape-gain quantization bits are related as \be\label{MSE-bit}\BD=(M-1)\BM.\nonumber\ee Comparing this with \eqref{asym-res} implies that the number of direction quantization bits for a product codebook optimized for the MSE measure differs by a factor of two.

Theorem \ref{T_Se} is the final part of the product codebook optimization for the single-user system. By combining this structure with the beamforming and power control functions in \eqref{opt-vec-quant-dir} and \eqref{power-cont-fnc2}, we achieve a robust system design for a limited-feedback single-user MISO system.

Finally, in order to derive the performance scaling of the designed system  with the feedback rate $B$, we substitute the optimal codebook sizes of Theorem \ref{T_Se} into \eqref{approx2} and compute the average transmission power.

\begin{theorem}\label{T_Char} For a single-user system with $M$ antennas at the base station and a total feedback rate of $B$ bits per fading block, we have the following asymptotic upper bound as $B\rightarrow\infty$:
\be\label{su-pow-scaling} \mbb{E}[P_{_{\md{SU}}}]<P_{_{\md{SU,CSI}}}\left(1+\sigma_{_{\md{SU}}}2^{-\frac{2B}{M+1}}\right),\ee
where $P_{_{\md{SU,CSI}}}=\gamma_{_0}\rho_{_{\md{SU,CSI}}}$ and \[\sigma_{_{\md{SU}}}=\frac{16(M+1)}{M-1}\left(\frac{\sqrt{\pi}(M-1)\Gamma((M+1)/2)}{{32}{\Gamma(M/2)}}\right)^{{2}/{(M+1)}}.\]
\end{theorem}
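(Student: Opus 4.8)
The plan is to substitute the optimal codebook sizes from Theorem~\ref{T_Se} directly into the upper bound \eqref{approx2} and collect the resulting terms. First I would write $\NM=2^{\BM}$ and $\ND=2^{\BD}$ with $\BM=\frac{2}{M+1}B-\kappa_{_{\md{SU}}}$ and $\BD=\frac{M-1}{M+1}B+\kappa_{_{\md{SU}}}$ from \eqref{opt-bit-su}--\eqref{opt-bit-su2}. Feeding these into the two decaying terms of \eqref{approx2} gives
\be
\NM^{-1}=2^{\kappa_{_{\md{SU}}}}\,2^{-\frac{2B}{M+1}},\qquad
16\lambda_M^2\ND^{-\frac{2}{M-1}}=16\lambda_M^2\,2^{-\frac{2}{M-1}\kappa_{_{\md{SU}}}}\,2^{-\frac{2B}{M+1}},
\ee
so that both correction terms share the common scaling $2^{-\frac{2B}{M+1}}$, which already establishes the claimed performance decay rate and leaves only the determination of the leading coefficient.

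The next step is to read off $\sigma_{_{\md{SU}}}$ as the sum of the two prefactors, $\sigma_{_{\md{SU}}}=2^{\kappa_{_{\md{SU}}}}+16\lambda_M^2\,2^{-\frac{2}{M-1}\kappa_{_{\md{SU}}}}$, and to reduce it to the stated closed form. Using $\kappa_{_{\md{SU}}}=\frac{M-1}{M+1}\log\frac{32\lambda_M^2}{M-1}$ one has $2^{\kappa_{_{\md{SU}}}}=\left(\frac{32\lambda_M^2}{M-1}\right)^{\frac{M-1}{M+1}}$ and $2^{-\frac{2}{M-1}\kappa_{_{\md{SU}}}}=\left(\frac{32\lambda_M^2}{M-1}\right)^{-\frac{2}{M+1}}$. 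The key algebraic observation is that $16\lambda_M^2=\frac{M-1}{2}\cdot\frac{32\lambda_M^2}{M-1}$, which lets me rewrite the second prefactor as $\frac{M-1}{2}\left(\frac{32\lambda_M^2}{M-1}\right)^{\frac{M-1}{M+1}}$; adding the two prefactors then collapses everything to $\sigma_{_{\md{SU}}}=\frac{M+1}{2}\left(\frac{32\lambda_M^2}{M-1}\right)^{\frac{M-1}{M+1}}$. Finally I would substitute $\lambda_M^2=\left(\frac{\sqrt{\pi}\Gamma((M+1)/2)}{\Gamma(M/2)}\right)^{2/(M-1)}$ from Lemma~\ref{L_Do} and rearrange exponents so that the $(M-1)/(M+1)$ power acting on $\lambda_M^2$ reduces the gamma factor to the power $2/(M+1)$, matching the form in the theorem statement.

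The step I expect to require the most care is the passage from the strict bound \eqref{approx2} to a clean asymptotic statement. The intermediate approximations (a) and (b) in \eqref{approx2} replace $\zeta_{\eta/a}(\NM)$ by its limit and linearize the denominator $1-16\lambda_M^2\ND^{-2/(M-1)}$, and both are exact only as $\NM,\ND\rightarrow\infty$. To keep the argument honest I would retain $\zeta_{\eta/a}(\NM)$ explicitly, invoke \eqref{limit1} that $\lim_{\NM\rightarrow\infty}\zeta_{\eta/a}(\NM)=1$ to show the magnitude correction exponent is $1+o(1)$, and bound the denominator expansion error by a term of order $2^{-\frac{4B}{M+1}}$, which is dominated by $2^{-\frac{2B}{M+1}}$ as $B\rightarrow\infty$. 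With these lower-order contributions absorbed, the strict inequality \eqref{su-pow-scaling} holds for all sufficiently large $B$ with the leading coefficient exactly equal to the $\sigma_{_{\md{SU}}}$ computed above.
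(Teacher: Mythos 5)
Your proposal carries out exactly the calculation the paper omits ("the proof uses basic calculations"): substituting the optimal sizes from Theorem~\ref{T_Se} into \eqref{approx2}, observing that both correction terms share the $2^{-\frac{2B}{M+1}}$ scaling, and summing the prefactors. The algebra checks out --- $2^{\kappa_{_{\md{SU}}}}+16\lambda_M^2 2^{-\frac{2}{M-1}\kappa_{_{\md{SU}}}}=\frac{M+1}{2}\bigl(\frac{32\lambda_M^2}{M-1}\bigr)^{\frac{M-1}{M+1}}$ does equal the stated $\sigma_{_{\md{SU}}}$ --- and your handling of the residual approximation errors is, if anything, more careful than the paper's.
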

\begin{proof}
The proof uses basic calculations, which are omitted due to the space limits.
\end{proof}

Thus, as the feedback rate $B$ increases, the performance of the designed system approaches the performance of the perfect CSI system as $2^{-\frac{2B}{M+1}}$.

\subsection{Complex Space Channels}\label{complex_channels}

As mentioned earlier, the results in Section \ref{S_Char-B} are based on the assumption of real space channels. The same approach however can be used to derive the bit allocation laws for complex space channels. The only main difference is in the use of Hamming bound in \eqref{Hamming-su}. For the complex space channels, the Hamming bound is as follows \cite{heath}:
\be\label{complex-hamming} \delta<2\ddot{N}^{-\frac{1}{2(M-1)}}.\ee
By using this bound instead of \eqref{Hamming-su}, the remainder of the analysis can be applied in a similar fashion to derive the optimal magnitude-direction bit allocation law. By doing so, one can show that for complex channels and in the asymptotic regime of $B\rightarrow\infty$, the number of magnitude and direction quantization bits are related as $\ddot{B}{=}(M{-}1)\dot{B}$ instead of the law in \eqref{asym-res} for real channels \cite{queens}. Furthermore, for the complex space, the the system performance scales with the feedback rate as $2^{-\frac{B}{M}}$ instead of $2^{-\frac{2B}{M+1}}$ for the real space.

\section{Numerical Results} \label{numerical_results}

In the process of deriving optimal bit allocation laws in Theorem \ref{T_Se}, we used upper bounds in \eqref{theorem1} and \eqref{phi-bound-su} and the approximations in \eqref{approx2} to find an analytically tractable approximation of the original objective function in \eqref{P-su}. In order to examine the accuracy of the magnitude and direction bit allocation laws in Theorem \ref{T_Se}, this section compares these closed-form solutions with the corresponding numerically optimized bit allocations.

In order to find the optimal bit allocations, we need to numerically compute and minimize the average transmission power in \eqref{P-su} in terms of the magnitude and direction codebook sizes $\dot{N}$ and $\ddot{N}$. The minimization is subject to the constraint $\dot{N}\ddot{N}{=}2^B$ or equivalently $\dot{B}{+}\ddot{B}{=}B$, where $\dot{B}$ and $\ddot{B}$ are the number of bits assigned to magnitude and direction quantization and $B$ is the total number of feedback bits.

The expression for the average transmission power in \eqref{P-su} comprises two terms that are controlled independently by the magnitude and direction quantization codebooks. The term $\mbb{E}[{1}/{\tilde{Y}}]$ in \eqref{P-su} is controlled by the channel magnitude quantizer and depends on the magnitude quantization codebook size $\dot{N}$. The variable $\phi$ on the other hand is controlled by the direction quantizer and depends on the direction codebook size $\ddot{N}$.

For the channel magnitude quantization, we use the uniform (in dB) codebook $\mbb{Y}^\star$ defined in Definition \ref{uniform-mag-code} and numerically compute the term $\mbb{E}[{1}/{\tilde{Y}}]$ using \eqref{Pave}. In our numerical results, we use the chi-square distribution in \eqref{chi} with $M=5$ as the channel magnitude distribution. For the direction quantization, on the other hand, we use real-space Grassmannian codebooks and rely on the available numerical tables that give the angular opening $\phi$ as a function of the codebook size $\ddot{N}$ \cite{grass_code}. The values of $\sin{\phi}$ for channel space dimension $M{=}5$ and codebook sizes up to $\ddot{N}{=}100$ are shown in Fig. \ref{Fig6}.

\begin{figure}
\centering
\includegraphics[width=3.53in]{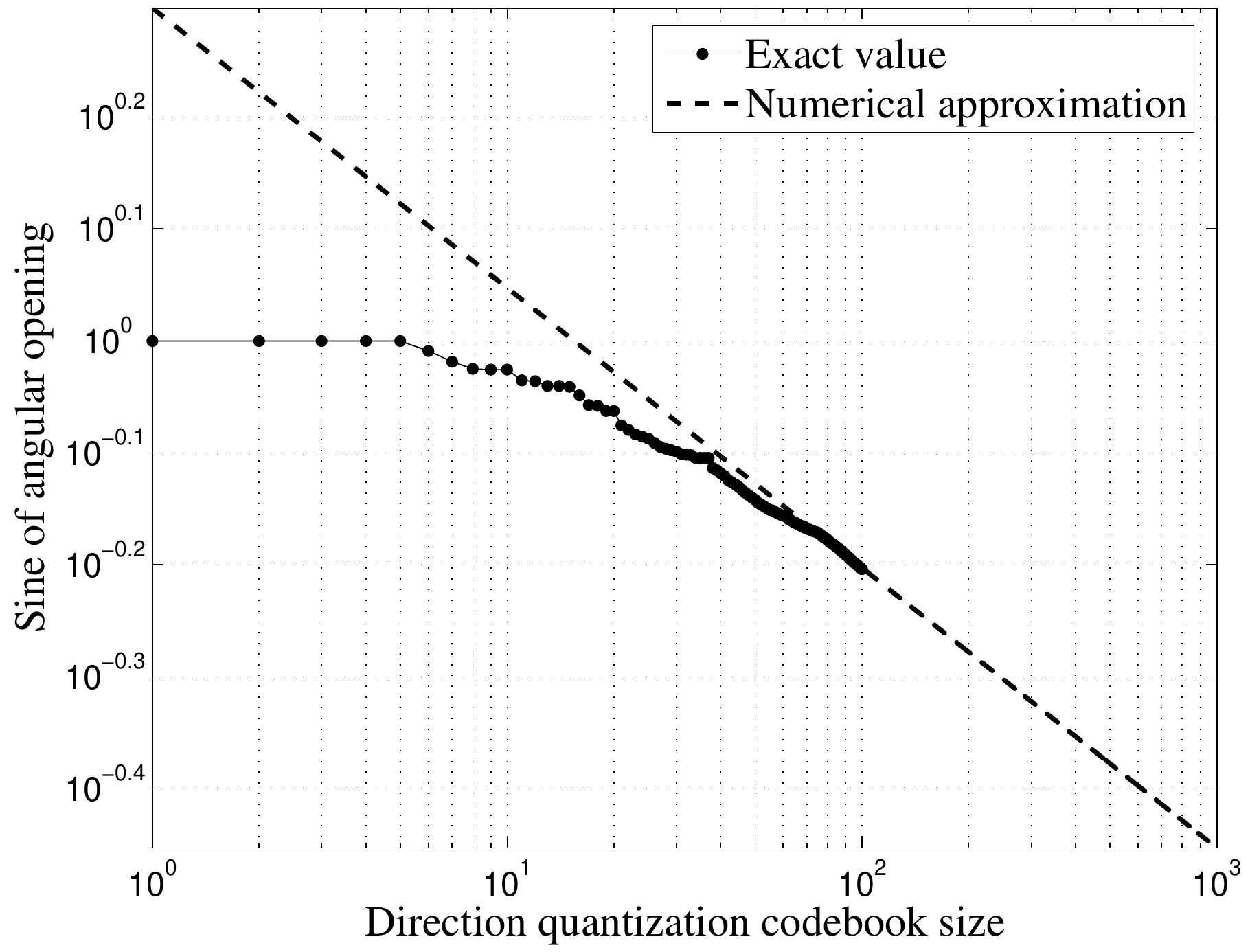}
\caption{Sine of angular opening $\phi$ as a function of direction codebook size $\ddot{N}$ for Grassmannian codebooks in $\mathds{R}^5$.}
\label{Fig6}
\end{figure}

Since the numerical tables for Grassmannian codebooks are only available for moderate values of $\ddot{N}$, we need to derive an extrapolation of the available data for larger values of $\ddot{N}$. For this purpose we fit a line in the least-squares sense to the available data $(\ddot{N},\sin{\phi})$ in the logarithmic scale. Fig. \ref{Fig6} shows a line fitting that uses the data points with $80\leq\ND\leq 100$ as input. The slope of this line is forced to be $-\frac{1}{M-1}$ in order to match the Hamming upper bound in \eqref{phi-bound-su}\footnote{According to the Gilbert-Varshamov bound \cite{Barg,heath}, we have \[\sin{\phi}>\alpha\ddot{N}^{-{1}/{(M-1)}},\] for some constant $\alpha>0$. By combining this bound with the Hamming bound in \eqref{phi-bound-su}, it is evident that $\sin{\phi}$ should asymptotically scale as $\ddot{N}^{-{1}/{(M-1)}}$ with the codebook size $\ddot{N}$. The slope of the fitted line in logarithmic scale should therefore be equal to $-\frac{1}{M-1}$.}. This leads to the following approximation of $\sin{\phi}$ for codebook sizes $\ddot{N}>100$:
\be\label{phi_approx}
\sin{\phi}\approx {1.9833} {\ddot{N}^{-\frac{1}{M-1}}},
\ee
where $M=5$ is the channel space dimension or equivalently the number of base station antennas. In numerical computation of the objective function in \eqref{P-su}, we rely on the approximation in \eqref{phi_approx} whenever $\ND>100$.

To find the optimal bit allocation for a given number of feedback bits $B$, we numerically compute and compare the objective function \eqref{P-su} for different integer pairs $(\dot{B},\ddot{B})$ that satisfy $\dot{B}{+}\ddot{B}{=}B$ and choose the best pair. The resulting bit allocation is shown in Fig. \ref{Fig7}, where the target outage probability is set to $q={10}^{-4}$. The figure also compares the results with the values of $\dot{B}$ and $\ddot{B}$ in \eqref{opt-bit-su} and \eqref{opt-bit-su2}, which are rounded to the closest integer numbers. As the figure shows the analytically optimized bit allocations in Theorem \ref{T_Se} are at most two bits away from the numerically optimized ones. Furthermore, as the number of feedback bits $B$ increases, the asymptotic result of $\ddot{B}=\frac{M-1}{2}\dot{B}$ becomes more accurate.

\begin{figure}
\centering
\includegraphics[width=3.5in]{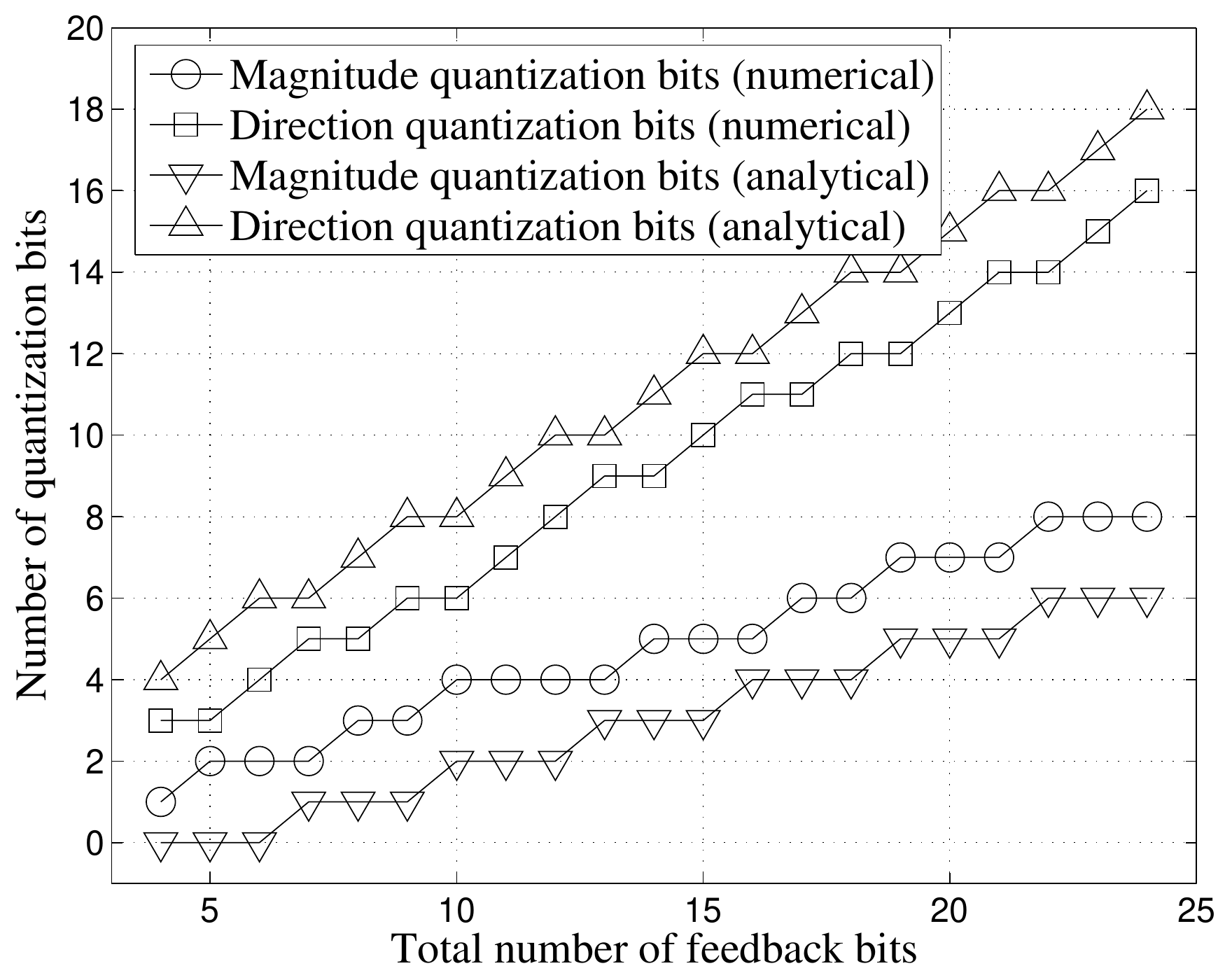}
\caption{Magnitude and direction bit allocations for $M=5$ antennas and target outage probability $q={10}^{-4}$.}
\label{Fig7}
\end{figure}

In order the evaluate the system performance under these bit allocations, we define a distortion measure as follows:
\be\label{dist_meas}
\mathcal{D}(B)=\frac{\mbb{E}[P_{_{\md{SU}}}]-P_{_{\md{SU,CSI}}}}{P_{_{\md{SU,CSI}}}},
\ee
where $\mbb{E}[P_{_{\md{SU}}}]$ is the average transmission power with limited feedback as defined in \eqref{P-su} and $P_{_{\md{SU,CSI}}}=\gamma_{_0}\rho_{_{\md{SU,CSI}}}$ is the average transmission power with perfect CSI as defined in Theorem \ref{T_Char}, and $\gamma_{_0}$ is the target SNR. For the channel magnitude with chi-square distribution in \eqref{chi}, one can show that $\rho_{_{\md{SU,CSI}}}=\frac{1}{M-2}$ and therefore $P_{_{\md{SU,CSI}}}=\frac{\gamma_{_0}}{M-2}$.

The distortion measure $\mathcal{D}(B)$ clearly depends on the total number of feedback bits $B$ and also on how these bits are split between the channel magnitude and direction quantization codebooks. It should be noted however that the distortion measure does not depend on the target SNR, since $\gamma_{_0}$ appears as a multiplicative factor both in the enumerator and denominator. As a result, the optimal magnitude-direction bit allocation, which minimizes the distortion or equivalently $\mbb{E}[P_{_{\md{SU}}}]$, does not depend on the target SNR as it is verified by Theorem \ref{T_Se}.

Fig. \ref{Fig8} compares the distortion measure when one uses the numerically optimized bit allocations versus the case where bit allocations in \eqref{opt-bit-su} and \eqref{opt-bit-su2} are used. As it is shown in the figure, the two bit allocations show a close performance, which verifies the accuracy of the closed-form bit allocation laws in Theorem \ref{T_Se}. Fig. \ref{Fig8} also shows the distortion measure applied to the upper bound in Theorem \ref{T_Char}, which leads to the following asymptotic upper bound:
\be\label{dist_bound}
\mc{D}(B)<\sigma_{_{\md{SU}}}2^{-\frac{2B}{M+1}}.
\ee

\begin{figure}
\centering
\includegraphics[width=3.6in]{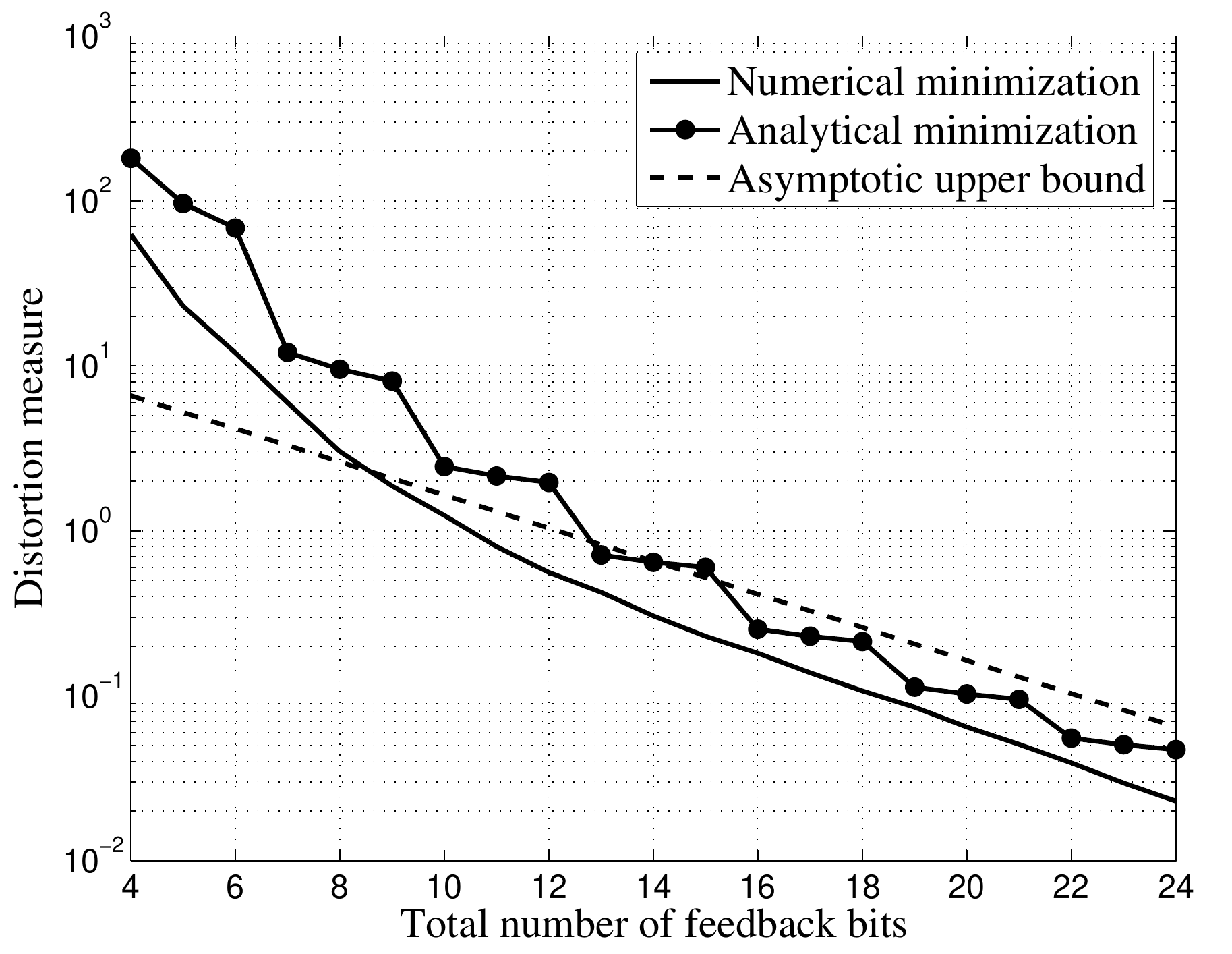}
\caption{Distortion measure $\mathcal{D}(B)$ for $M=5$ antennas and target outage probability $q={10}^{-4}$. }
\label{Fig8}
\end{figure}

Although the quantization codebook design objective used in our work, i.e. minimization of the average transmission power subject to a target outage probability, is quite different from those used in standard quantization theory, the distortion measure upper bound in \eqref{dist_bound} still resembles the asymptotic distortion bounds derived in the literature of high resolution quantization theory \cite{gersho,zheng}. The authors of \cite{zheng} specifically develop a rather general framework for analysis of high resolution vector quantization with locally quadratic distortion functions and show that the average distortion function scales asymptotically with the number of quantization bits $B$ as $2^{-\frac{2B}{k}}$, where $k$ is the \emph{degrees of freedom} of the vector under quantization. One would therefore naturally expect a denominator of $M$ instead of $M{+}1$ in the exponent of the upper bound in \eqref{dist_bound}, since the channel vector is $M$-dimensional. This difference, we believe, is due to the difference between the distortion measure used in this paper with the standard quadratic (e.g. MSE) distortion measures used in the literature. Nonetheless, it is still possible to present an intuitive justification, although inevitably inaccurate one, for the difference in the asymptotic scaling. We describe this justification next.

We first note that for a $M$-dimensional channel vector in real space, the channel magnitude is a one-dimensional variable. The channel direction, on the other hand, is $M{-}1$ dimensional due to its unit-norm constraint. Now, according to the asymptotic bit allocation law in \eqref{asym-res}, every bit assigned to channel magnitude quantization translates to two bits for channel direction quantization. It is therefore possible to say that in the process of dividing the available number of bits $B$ among the available degrees of freedom, the degrees of freedom in the channel magnitude space counts twice the degrees of freedom in the channel direction space. Hence, we can assume an equivalent degrees of freedom and calculate it as $(M-1) + 2 = M+1$.

A similar interpretation can also be applied to channel quantization in complex space. In this case, the channel is $2M$ dimensional, the channel magnitude is one-dimensional, and the channel direction is $2M-1$ dimensional. Nevertheless, as far as the power control and beamforming problem in this paper is concerned, any channel vector $\mathbf{h}$ is equivalent to all the channels in the form of $e^{j\theta}\mathbf{h}$, where $\theta$ is an arbitrary real number. This means that the channel direction space has effectively $2M-2$ degrees of freedom. Now by following the intuition described for real space channels, we can assume an equivalent degrees of freedom for channel quantization and calculate it as $(2M-2) + 2 = 2M$. This would lead to an asymptotic scaling of $2^{-\frac{2B}{2M}}=2^{-\frac{B}{M}}$, which coincides with the scaling mentioned in Section \ref{complex_channels} for complex channels.

\section{Conclusions} \label{S_Panj}

This paper studies the design of single-user MISO system with limited CSI at the base station. The problem is formulated as the minimization of the average base station transmission power subject to the outage probability constraint at the user side. We show that the asymptotically optimal channel magnitude quantization codebook is uniform in dB scale. This result does not depend on the channel magnitude distribution function as long as some regularity conditions are satisfied. Combining the uniform in dB magnitude codebook with a spatially uniform channel direction quantization codebook, we form a product channel quantization codebook and optimize the quantization bit allocation as $B\rightarrow\infty$, where $B$ is the total number of feedback quantization bits. It is shown that for channels in real space, as $B\rightarrow\infty$, the optimal number of direction quantization bits is $\frac{M-1}{2}$ times the number of magnitude quantization bits, where $M$ is the number of BS antennas. We also show that, as $B$ increases, the performance of the designed system approaches the performance of perfect CSI system as $2^{-\frac{2B}{M+1}}$. For channels in complex space, on the other hand, the number of magnitude and direction quantization bits are related by a factor of $(M{-}1)$ and the system performance scales as $2^{-\frac{B}{M}}$ as $B\rightarrow\infty$.

\section*{Acknowledgments}
The authors wish to acknowledge the anonymous reviewers for many helpful comments, especially for pointing out an intuitive connection between the asymptotic performance scaling result in this paper and that of \cite{zheng}.

\appendices

\section{}\label{A_Yek}
For notation convenience, the notations $y^{(n)}$ and $Q^{(n)}$ in Section \ref{S_Do} are replaced with $y_n$ and $Q_n$.

\subsection{Proof of Lemma \ref{L_Yek}}
\begin{proof}
We first note that $y_1{=}a{=}F^{-1}(q)$ is fixed according to \eqref{y1}. By taking the derivative of \eqref{Pave} with respect to $y_n$ for $2 {\leq} n {\leq}\NM$, we have
\be\label{1} \frac{\partial \mc{P}}{\partial y_n}=f(y_n)\left[\frac{1}{y_{n-1}}-\frac{1}{y_n}\right]-\frac{Q_n}{y_n^2}.\ee

For $2{\leq} n{\leq} \NM$, we have the following by adding and subtracting a fixed term to \eqref{1}:
\begin{IEEEeqnarray}{lll}\label{2}  \frac{\partial \mc{P}}{\partial y_n}&=&f(y_n)\left[\frac{1}{y_{n-1}}-\frac{1}{y_n}\right]- f(y_n)\left[\frac{y_{n+1}-y_n}{y_n^2}\right]\\
&+&f(y_n)\left[\frac{y_{n+1}-y_n}{y_n^2}\right]-\frac{Q_n}{y_n^2}.\nonumber
\end{IEEEeqnarray}
The first two terms in \eqref{2} cancel out since $y_n=ar^{n-1}$ form a geometric sequence. By using the definition of $Q_n$, we therefore have the following for $2 {\leq} n {\leq}\NM-1$:
\be\label{4}
\frac{\partial \mc{P}}{\partial y_n}=\frac{1}{y_n^2}\left[f(y_n)\ ({y_{n+1}-y_n})-\ \int_{y_n}^{y_{n+1}}{f(y)\ \du y}\right].
\ee

By applying Taylor's expansion to the function $g(t)=\int_{0}^{t}{f(y)\ \du y}$ for $y_n{\leq} t{\leq} y_{n+1}$, we have
\be\label{5}
\left|\int_{y_n}^{y_{n+1}}{f(y) \ \du y}\ - \ f(y_n)\ ({y_{n+1}-y_n}) \right|<\frac{1}{2}\mu ({y_{n+1}-y_n})^2,
\ee
where $\mu$ is an upper bound on $g''(t)=f'(t)$ for $y_n{\leq} t{\leq} y_{n+1}$.

By combining \eqref{4} and \eqref{5}, we therefore have the following for $2 {\leq} n {\leq}\NM-1$:
\be\label{6}
\left|\frac{\partial \mc{P}}{\partial y_n}\right|<{\frac{1}{2}\mu ({y_{n+1}-y_n})^2}/{y_n^2}=\frac{1}{2}\mu(r-1)^2.
\ee

Now for the last quantization level $y_n$, $n=\NM$, we have
\begin{IEEEeqnarray}{lll}\label{7}
\left|\frac{\partial \mc{P}}{\partial y_n}\right|&=&f(y_n)\left[\frac{1}{y_{n-1}}-\frac{1}{y_n}\right]-\frac{Q_n}{y_n^2}\nonumber\\
&=&\left|\frac{1}{y_n^2}\left(y_n f(y_n)\left[\frac{y_n}{y_{n-1}}-1\right]-Q_n\right)\right|_{n{=}\NM}\nonumber\\
&=&\left|\frac{1}{y_n^2}\left((r{-}1)y_nf(y_n){-}\int_{y_n}^{\infty}{f(t) \mathrm{d} t}\right)\right|_{n{=}\NM}{=}D.~~~~~~~
\end{IEEEeqnarray}

By combining \eqref{6} and \eqref{7} we therefore have:
\begin{IEEEeqnarray}{lll}\label{8}
\left\|\nabla_{\mbb{Y}^{(g)}(r)} \mathcal{P}\right\|&=&\left(\sum_{n=2}^{\NM-1}{\left|\frac{\partial \mc{P}}{\partial y_{n} }\right|^2}+\left|\frac{\partial \mc{P}}{\partial y_{\NM}}\right|^2\right)^{\frac{1}{2}}\nonumber\\
&\leq& \left(\sum_{n=2}^{\NM-1}{\left|\frac{\partial \mc{P}}{\partial y_n}\right|^2}\right)^{\frac{1}{2}} +D\nonumber\\
&<& \frac{1}{2}\mu\sqrt{\NM}\ {(r-1)^2}+D.
\end{IEEEeqnarray}

\end{proof}

\subsection{Proof of Theorem \ref{T_Yek}}
\begin{proof}
According to \eqref{ystar}, the geometric sequence parameter for $\mbb{Y}^\star$ is given by
\be\label{T1-0} r^\star=1+\mathcal{L}_{\eta/a}(\NM)=1+\NM^{-\zeta_{\eta/a}\left(\NM\right)},\ee
where \[\eta=\lim_{y\rightarrow\infty}{\frac{-f(y)}{f'(y)}}.\] By substituting the value of $r^\star$ in the upper bound of Lemma \ref{L_Yek} in \eqref{8}, we have
\be\label{T1-1}
\left\|\nabla_{\mbb{Y}^\star} \mathcal{P}\right\| < \frac{1}{2}\mu \NM^{-\left(2\zeta_{\eta/a}\left(\NM\right)-\frac{1}{2}\right)}+D,
\ee
where \be\label{T1-2} D=\left|\frac{1}{y_n^2}\left((r{-}1)y_nf(y_n){-}\int_{y_n}^{\infty}{f(t) \mathrm{d} t}\right)\right|_{n{=}\NM}.\ee

Noting the definition of $r^\star$ in \eqref{T1-0}, the definition of the function $\mc{L}_c(n)$ in \eqref{L_c}, and also that $y_{\NM}=ar^{\NM-1}$, we have
\be\label{T1-3}(r^\star-1)y_{\NM}=\mc{L}_{\eta/a}(\NM) \ a\left(1+\mc{L}_{\eta/a}(\NM)\right)^{\NM-1}=a\cdot \eta/a=\eta,  \ee
therefore
\be\label{T1-4}
y_{\NM}=\frac{\eta}{r^\star-1}=\eta \NM^{\zeta_{\eta/a}\left(\NM\right)}.
\ee
Note that according to \eqref{limit1}, $\lim_{\NM\rightarrow\infty}\zeta_{\eta/a}(\NM)=1$, and therefore $y_{\NM}\rightarrow\infty$ as $\NM\rightarrow\infty$.

In order to bound $D$ in \eqref{T1-1}, we substitute \eqref{T1-3} in \eqref{T1-2}:
\begin{IEEEeqnarray}{lll}\label{T1-5}
D&=&\frac{f(y_{\NM})}{y_{\NM}^2}\left|\eta {-}\frac{1}{f(y_{\NM})}\int_{y_{\NM}}^{\infty}{f(t) \mathrm{d} t}\right|\nonumber\\
&=&\frac{y_{\NM}f(y_{\NM})}{y_{\NM}^3}\left|\eta {-}\frac{1}{f(y_{\NM})}\int_{y_{\NM}}^{\infty}{f(t) \mathrm{d} t}\right|.
\end{IEEEeqnarray}

Now noting that $y_{\NM}\rightarrow\infty$ as $\NM\rightarrow\infty$, we have
\be\label{T1-6} \lim_{\NM\rightarrow\infty}\frac{1}{f(y_{\NM})}\int_{y_{\NM}}^{\infty}{f(t) \mathrm{d} t} =\lim_{\NM\rightarrow\infty} \frac{-f(y_{\NM})}{f'(y_{\NM})}=\eta. \ee
Moreover, according third regularity condition in Definition \ref{reg-cond} in Section \ref{S_Do}, we have
\be\label{T1-7} \lim_{\NM\rightarrow\infty} y_{\NM}f(y_{\NM}) =0. \ee

Combining \eqref{T1-6} and \eqref{T1-7} with \eqref{T1-5}, we have the following as $\NM\rightarrow\infty$:
\be\label{T1-8} D=o\left({1}/{y_{\NM}^3}\right)=o\left(\NM^{-3\zeta_{\eta/a}\left(\NM\right)}\right).    \ee
And since $\lim_{\NM\rightarrow\infty}\zeta_{\eta/a}(\NM)=1$, we have $D=o(\NM^{-3+\epsilon})$, for any arbitrary $\epsilon>0$. By substituting this in \eqref{T1-1}, the proof is complete.

\end{proof}

\subsection{Proof of Theorem \ref{T_Do}}

\begin{proof}
\begin{figure}
\centering
\includegraphics[width=2in]{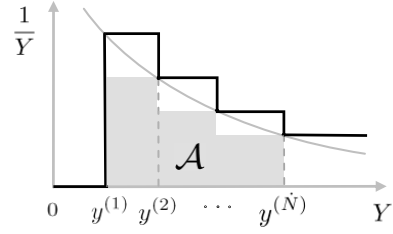}
\caption{Proof of Theorem \ref{T_Do}. }
\label{Fig5}
\end{figure}

Consider the shaded area $\mc{A}$ in Fig. \ref{Fig5}. We clearly have
\be\label{T2-1}
\mc{A}<\int_{0}^{\infty}\frac{1}{y} f(y)\ \du y =\mbb{E}[\frac{1}{Y}]=\rho_{_{\md{SU,CSI}}}.
\ee

On the other hand, according to Fig. \ref{Fig5},
\be\label{T2-2}
\mc{A}=\sum_{n=1}^{\NM-1}{\frac{Q_n}{y_{n+1}}}=\frac{1}{r^\star}\sum_{n=1}^{\NM-1}{\frac{Q_n}{y_n}}=\frac{1}{r^\star}\left(\mc{P}(\mbb{Y}^\star)-\frac{Q_{\NM}}{y_{\NM}}\right).
\ee
By combining \eqref{T2-1} and \eqref{T2-2}, we have
\be\label{T2-3}
\mc{P}(\mbb{Y}^\star)< r^\star\rho_{_{\md{SU,CSI}}}+\frac{Q_{\NM}}{y_{\NM}}= \left(1+\NM^{-\zeta_{\eta/a}\left(\NM\right)}\right)\rho_{_{\md{SU,CSI}}}+\frac{Q_{\NM}}{y_{\NM}},
\ee
where we have used the definition of $r^\star$ in \eqref{T1-0}.

Now from Markov's inequality,
\be\label{T2-4} Q_{\NM}=\md{prob}[Y>y_{\NM}] < \frac{\mbb{E}[Y]}{y_{\NM}},\ee
and therefore by using \eqref{T1-4}, we have
\be\label{T2-5} \frac{Q_{\NM}}{y_{\NM}}<\frac{\mbb{E}[Y]}{y_{\NM}^2}=\frac{\mbb{E}[Y]}{\eta^2} \NM^{-2\zeta_{\eta/a}\left(\NM\right)}. \ee

By combining \eqref{T2-3} and \eqref{T2-5}, we have
\[\frac{\mc{P}(\mbb{Y}^\star)}{\rho_{_{\md{SU,CSI}}}}<1+\NM^{-\zeta_{\eta/a}\left(\NM\right)}+\omega\NM^{-2\zeta_{\eta/a}\left(\NM\right)},\]
where $\omega=\frac{\mbb{E}[Y]}{\eta^2\rho_{_{\md{SU,CSI}}}}$.
\end{proof}

\section{Proof of Lemma \ref{L_Do}}\label{A_Do}
Let $\mathfrak{U}_M$ denote the unit hypersphere in $\ds{R}^M$. Also let $\mbb{U}$ denote a Grassmannian codebook of size $\ND$ and define $\delta$ as the \emph{minimum chordal distance} of this codebook, as defined in \cite{heath}. Consider an arbitrary unit vector $\bu$ and define an spherical cap around it as \[\mc{B}_{\psi}=\left\{\mbf{w}\in\mathfrak{U}_M\left|\angle{(\mbf{w},\bu)}<\psi\right.\right\}\] where \[\psi=\arcsin{\delta/2}.\] By applying the Hamming-bound argument, as in \cite{heath}, an upper bound on the Grassmannian codebook size can be achieved as follows:
\be\label{hamming} \ND <\frac{\textsf{A}(\mathfrak{U}_M) }{\textsf{A}(\mc{B}_\psi)}, \ee
where $\textsf{A}(\cdot)$ denotes the surface area of its argument.

The area of the unit hypersphere in $\ds{R}^M$ is given by
\be\label{unit-sph-area} \textsf{A}(\mathfrak{U}_M)=M C_M,\ee
where $C_M=\frac{\pi^{M/2}}{\Gamma({M/2} +1)}$, and $\Gamma(\cdot)$ is the gamma function. The area of the spherical cap on the other hand is given by
\be\label{cap-area} \textsf{A}(\mc{B}_\psi) = 2(M-1)C_{M-1}\int_{0}^{\psi}{\sin^{M-2} \varphi \ \du \varphi}.\ee

Since the minimum chordal distance of the codebook tends to zero as codebook size tends to infinity, we have $\lim_{\ND\rightarrow\infty}{\psi}=0$. Therefore, for $0\leq \varphi\leq \psi$ and any $0<\epsilon<1$, we have the following for large enough $\ND$:
\be\label{sin-bound} \sin^{M-2} \varphi > (1-\epsilon) \varphi^{M-2}. \ee
By using this inequality in \eqref{cap-area}, we have
\be\label{cap-bound} \textsf{A}(\mc{B}_\psi) > 2(1-\epsilon)C_{M-1} \psi^{M-1}. \ee

By combining \eqref{hamming}, \eqref{unit-sph-area}, and \eqref{cap-bound}, we have  \be\label{hamming-proof} \ND<\frac{MC_M}{2(1-\epsilon)C_{M-1} \psi^{M-1}},\ee
and therefore by using $\psi=\arcsin{\delta/2}$, we achieve
\begin{IEEEeqnarray}{lll}\label{hamming-end} ((1-\epsilon){\delta}/{2})^{M-1}&<&(1-\epsilon)(\delta/2)^{M-1}=(1-\epsilon)\sin^{M-1}\psi\nonumber\\
&<& (1-\epsilon)\psi^{M-1} < \frac{MC_M}{2C_{M-1}}\ND^{-1}. \end{IEEEeqnarray}
We therefore have the following for large enough $\ND$ and any $0<\epsilon<1$:
\be\label{last} \delta < \frac{2}{1-\epsilon}\lambda_M \ND^{-\frac{1}{M-1}},\ee
where \[\lambda_M=\left(\frac{MC_M}{2C_{M-1}}\right)^{\frac{1}{M-1}}=
\left(\frac{\sqrt{\pi}\Gamma((M+1)/2)}{\Gamma({M/2})}\right)^{\frac{1}{M-1}}.\]
By setting $\epsilon=\frac{1}{2}$ in \eqref{last}, we achieve the bound in Lemma \ref{L_Do} and the proof is complete.

\end{document}